\documentclass[cleveref,thm-restate]{lipics-v2021}
\hideLIPIcs
\nolinenumbers
\usepackage{graphicx}
\usepackage{xcolor}
\usepackage{xspace}
\usepackage{bbm}
\usepackage{mathtools}
\usepackage[style=english]{csquotes}

\usepackage[ruled, linesnumbered, vlined]{algorithm2e}
\DontPrintSemicolon
\usepackage{multirow}

\usepackage{thm-restate}

\newcommand{\Oh}{\ensuremath{\mathcal{O}}}

\newcommand{\F}{{Fréchet distance}\xspace}

\newcommand{\complSimp}{\ensuremath{\ell}} 
\newcommand{\FreeOld}{\ensuremath{F_{\text{pre}}}}
\newcommand{\FreeNew}{\ensuremath{F_{\text{new}}}}
\newcommand{\updates}{\ensuremath{u}} 
\newcommand{\setA}{\ensuremath{A}} 
\newcommand{\numbA}{\ensuremath{a}}
\newcommand{\eps}{\ensuremath{\varepsilon}}
\newcommand{\dF}{\ensuremath{d_{\text{F}}}}
\newcommand{\ddF}{\ensuremath{d_{\text{dF}}}}
\newcommand{\lipicsmathdash}{\vcenter{\hbox{\labelitemi}}}

\usepackage{soul}

\title{Fréchet Distance in the Imbalanced Case}
\titlerunning{Fréchet Distance in the Imbalanced Case}

\authorrunning{L. Blank}

\Copyright{Lotte Blank} 
\ccsdesc[100]{Theory of computation~Computational geometry}
\keywords{Fréchet distance, SETH, Orthogonal Vectors, Lower Bounds, distance oracle, data structures} 

\author{Lotte Blank}{University of Bonn, Germany}{lblank@uni-bonn.de}{https://orcid.org/0000-0002-6410-8323}{Funded by the Deutsche Forschungsgemeinschaft (DFG, German Research Foundation) – 459420781 (FOR AlgoForGe)}

\begin{document}

\maketitle

\begin{abstract}
Given two polygonal curves $P$ and $Q$ defined by $n$ and $m$ vertices with $m\leq n$, we show that the discrete Fréchet distance in 1D cannot be approximated within a factor of $2-\eps$ in $\mathcal{O}((nm)^{1-\delta})$ time for any $\varepsilon, \delta>0$ unless OVH fails. 
Using a similar construction, we extend this bound for curves in 2D under the continuous or discrete Fréchet distance and increase the approximation factor to $1+\sqrt{2}-\varepsilon$ (resp. $3-\varepsilon$) if the curves lie in the Euclidean space (resp. in the $L_\infty$-space). This strengthens the lower bound by Buchin, Ophelders, and Speckmann to the case where $m=n^{\alpha}$ for $\alpha\in(0,1)$ and increases the approximation factor of $1.001$ by Bringmann.
For the discrete Fréchet distance in 1D, we provide an approximation algorithm with optimal approximation factor and almost optimal running time.
Further, for curves in any dimension embedded in any $L_p$ space, we present a $(3+\varepsilon)$-approximation algorithm for the continuous and discrete Fréchet distance using $\mathcal{O}((n+m^2)\log n)$ time, which almost matches the approximation factor of the lower bound for the $L_\infty$ metric.
\end{abstract}

\section{Introduction}
The \F is a distance metric for polygonal curves, where each curve is represented as a piecewise-linear interpolation between its vertices. There are different variants of this metric and the two most common variants are the discrete and the continuous Fréchet distance. Alt and Godau~\cite{AG95} were the first to study the computational complexity of the Fréchet distance. Let $n$ and $m$ denote the number of vertices of the two curves, which we refer to as their complexities, with $n \ge m$. They showed that the continuous Fréchet distance can be computed in $\Oh(nm \log n)$ time. Eiter and Mannilla~\cite{EM94} defined the discrete \F and gave an $\Oh(nm)$ time algorithm. Although both results were slightly improved later~\cite{CH25, BBMM2017, AAK2014}, Bringmann~\cite{bringmann2014walking} showed that for $d$-dimensional curves with $d \ge 2$ it is very unlikely that significantly faster algorithms exists: unless the Strong Exponential Time Hypothesis (SETH) fails, no $(1.001)$-approximation algorithm for the continuous or discrete Fréchet distance can run in $\Oh((nm)^{1-\delta})$ time for any $\delta > 0$.
Later, Bringmann and Mulzer~\cite{bringmann2016approximability} extended this result for the discrete \F in 1D and increased the approximation factor up to~$1.4-\eps$ for any $\eps>0$\footnote{In~\cite{bringmann2016approximability}, this result is stated only for $n=m$, but the same construction works for the case $n\neq m$ as well.}.
For the continuous and the discrete \F in 1D, Buchin, Ophelders, and Speckmann~\cite{BOS18} showed a hardness result for approximation factors up to~$3-\eps$, but only in the case where $n=m$. A recent result by Blank and Driemel~\cite{BD24} shows that the continuous \F in 1D can indeed be computed faster when $m=n^{\alpha}$ with $\alpha\in(0,1)$, namely in $\Oh(n\log n+m^2\log^2n)$ time. 
\pagebreak
\subparagraph*{Related Work.}
In recent years, many works have studied approximation algorithms for the \F~\cite{CF21, HO24, HKOS23}. This line of research culminated in the first constant-factor approximation algorithm for the \F that runs in subquadratic time, due to Cheng, Huang, and Zhang~\cite{CHZ25}. 
Their randomized $(7+\eps)$-approximation algorithm for the continuous Fréchet distance runs in $\Oh(nm^{0.99}\log(n/\eps))$ time for any $\eps > 0$ and succeeds with probability at least $1-1/n^6$.

A simple idea to $3$-approximate the \F between curves where $m=n^\alpha$ with $\alpha\in(0,1)$ in subquadratic time is the following. Compute a curve $P'$ of complexity $n'\in o(n)$ and its distance~$\delta$ to $P$ such that every curve $Q$ of complexity~$m$ has \F at least~$\delta$ to $P$. 
Then compute the \F $\delta'$ between $P'$ and $Q$. Using triangle inequality, one can show that the \F between $P$ and $Q$ is at least $\max\{\delta'/2, \delta\}$ and at most $3\cdot \max\{\delta'/2, \delta\}$, which yields a $3$-approximation. The difficulty of this approach is to compute such a curve $P'$. 
Two related variants of this simplification problem have been studied.
The first is computing an optimal \emph{$m$-simplification} of~$P$, that is, a curve~$P^*$ of complexity at most~$m$ minimizing the \F to~$P$ among all curves of complexity at most~$m$. The second is computing an optimal \emph{$\delta$-simplification}, that is, a curve~$P^*$ of minimum complexity among all curves that have \F at most~$\delta$ to~$P$. However, no previous result achieved a subquadratic running time for curves in arbitrary dimension when the approximation factor on the complexity of the simplified curve is constant~\cite{AHMW05, BJWYZ08, BC19, CH23,CHJ25, FF23, KLW18}.

A different approach to reduce the running time of computing the \F is via \F oracles. Such an oracle is a data structure that preprocesses an input curve and subsequently answers \F queries with respect to query curves. Driemel and Har-Peled~\cite{DH13} described a data structure of size $\Oh(n\log n)$ using $\Oh(m^2\log n\log(m\log n))$ query time, which approximates the continuous \F up to a (large) constant factor. For exact queries, \cite{CH24} presented a data structure of size $\Oh(nm)^{\text{poly}(d,m)}$ and query time in $\Oh((md)^{\Oh(1)}\log(nm))$, where the query complexity~$m$ is given at preprocessing time. 
Blank and Driemel~\cite{BD24} showed that for curves in 1D, there is a data structure of size $\Oh(n\log n)$ that computes the continuous Fréchet distance exactly in $\Oh(m^2\log^2 n)$ time.
For the discrete \F, Filtser and Filtser~\cite{FF23} improved~\cite{DPS19} and presented a data structure with size in $\mathcal{O}(1/\eps)^{kd}\log (\eps)^{-1}$ that approximates the discrete \F within a factor of $(1+\eps)$ using expected $\widetilde{\mathcal{O}}(kd)$ query time, where $\widetilde{\Oh}(\cdot)$ hides polylogarithmic factors in $n$. Gudmundsson, Seybold, and Wong~\cite{GSW24} used the lower bound construction by Bringmann~\cite{bringmann2014walking} to show in a similar fashion to \cite{BDNP21, DPS19, R18} the following.
In 2D, there exists no \F oracle with $\textup{poly}(n)$ preprocessing time and $\Oh((nm)^{1-\delta})$ query time and approximation factor~$1.001$ unless SETH fails. 

\begin{table}
\caption{Upper and Lower Bounds for the \F Computation}\label{tab:overviewResults}
    \label{tab:placeholder}
\begin{center}
    \begin{tabular}{|c|clc|clc|}
        \hline
        \multicolumn{7}{|c|}{Upper Bounds}\\
        \hline
          & \multicolumn{3}{|c|}{Discrete} & \multicolumn{3}{|c|}{Continuous}\\
          \hline
          \multirow{2}{1.5em}{1D}&${\Oh}(mn)$ &exact &\cite{EM94, AAK2014}& $\widetilde{\Oh}(mn)$ &exact &\cite{AG95,BBMM2017, CH25}\\
         &$\widetilde{\Oh}(n+m^2)$ &$2$-approx. &Cor.\ref{cor:1DUpper}& $\widetilde{\Oh}(n+m^2)$ &exact &\cite{BD24}\\
         \hline
         \multirow{3}{1.5em}{dD}&${\Oh}(mn)$ &exact&\cite{EM94, AAK2014}& $\widetilde{\Oh}(mn)$&exact &\cite{AG95,BBMM2017,CH25}\\
            &$\widetilde{\Oh}(n+m^2)$& $(3+\eps)$-approx. &Thm.\ref{thm:dDUpper}&$\widetilde{\Oh}(n+m^2)$& $(3+\eps)$-approx. &Thm.\ref{thm:dDUpper}\\
            &$\widetilde{\Oh}(nm^{0.99})$ &$(7+\eps)$-approx. &\cite{CHZ25}&$\widetilde{\Oh}(nm^{0.99})$ &$(7+\eps)$-approx. &\cite{CHZ25}\\
         \hline\hline
         \multicolumn{7}{|c|}{Lower Bounds}\\
        \hline
          & \multicolumn{3}{|c|}{Discrete} & \multicolumn{3}{|c|}{Continuous}\\
          \hline
         \multirow{3}{1.5em}{1D}&$\nexists~{\Oh}(m^{2-\delta})$ &$(3-\eps)$-approx.  &\cite{BOS18}& $\nexists~{\Oh}(m^{2-\delta})$ &$(3-\eps)$-approx. &\cite{BOS18}\\
         &$\nexists~\widetilde{\Oh}((mn)^{1-\delta})$ &$(1.4-\eps)$-approx. &\cite{bringmann2016approximability}& &&\\
         &$\nexists~\widetilde{\Oh}((mn)^{1-\delta})$ &$(2-\eps)$-approx. &Thm.\ref{thm:1D}& &&\\
         \hline
         \multirow{3}{1.5em}{2D}&  
         \multicolumn{6}{|l|}{\hspace{2.5cm}${\nexists~\Oh}((mn)^{1-\delta})$\hspace{0.8cm} $(1.001)$-approx. \hspace{2cm}\cite{bringmann2014walking}}\\
         &\multicolumn{6}{|l|}{\hspace{2.5cm}${\nexists~\Oh}((mn)^{1-\delta})$\hspace{0.8cm} $(1+\sqrt{2}-\eps)$-approx. in $L_2$ \hspace{0.3cm}Thm.\ref{thm:2D}}\\
        &\multicolumn{6}{|l|}{\hspace{2.5cm}${\nexists~\Oh}((mn)^{1-\delta})$\hspace{0.8cm} $(3-\eps)$-approx. in $L_\infty$ \hspace{1cm}Thm.\ref{thm:2D}}\\
        \hline
    \end{tabular}
    
\end{center}
\end{table}

\subparagraph*{Our Results.}
 Our first set of results shows lower bounds for algorithms that approximate the \F conditioned on the Orthogonal Vectors Hypothesis (OVH).
 We show that the discrete \F between 1-dimensional curves cannot be $(2-\eps)$-approximated in $\Oh((nm)^{1-\delta})$ time for any $\eps, \delta>0$ unless OVH fails. This increases the approximation factor of $1.4-\eps$ by Bringmann and Mulzer~\cite{bringmann2016approximability}. Using a similar construction in 2D, we show that the continuous and discrete \F cannot be approximated within a factor less than $1+\sqrt{2}$ (resp. $3$) when the curves lie in the Euclidean space (resp. in the $L_\infty$-space) in $\Oh((nm)^{1-\delta})$. This leads to lower bounds for \F oracles as well and answers an open question by~\cite{GSW24}.

To obtain this first set of results, we reduce from the \emph{Orthogonal Vector Problem:} Given two sets of vectors $U, V\subset \{0,1\}^d$ of size $m$ and $n$, do there exist two vectors $u\in U$ and $v\in V$ such that $\langle u, v\rangle=0$? 
Then, the results follow by OVH, which is implied by SETH~\cite{Wil05, BK18}.

\begin{definition}[Orthogonal Vectors Hypothesis (OVH)]\label{def:OVH}
    For all $\delta>0$ there exists a $c>0$ such that there is no algorithm solving OV instances $U,V\subset \{0,1\}^d$ of size $m$ and $n$ with $m\leq n$ and $d=c\log n$ in time $\mathcal{O}((nm)^{1-\delta})$.
\end{definition}
We construct a curve~$P$ of complexity $\Oh(n+m)$ depending on $V$ and the size of $U$ and a curve~$Q$ of complexity $\Oh(m)$ depending only on $U$. Then, we show that the \F between~$P$ and $Q$ is at most $1$ if and only if the OV instance has a solution. In order to get a lower bound also for the case where $n\gg m$, we construct $P$ such that the main part of~$P$ lies in a ball of radius~$1$. Hence, this main part can be matched to a single point of~$Q$. Further, to obtain an inapproximability result of $2-\eps$ in 1D, the points of~$P$ and~$Q$ have integer coordinates. Hence, there are only three different values that the vertices of the main part of $P$ can have. 
This significantly restricts our possibilities to define~$P$ and is the main difference to the construction by Bringmann and Mulzer~\cite{bringmann2016approximability}. They also note in their paper that obtaining a lower bound for an approximation factor of $1.4$ or higher requires a significantly different construction compared to theirs.

On the positive side, we present a data structure that preprocesses a 1-dimensional curve of complexity $n$ in $\mathcal{O}(n\log n)$ time using $\mathcal{O}(m)$ space that $2$-approximates the discrete \F to query curves of complexity $m$ in $\Oh(m^2\log m)$ time. Hence, there exists a $2$-approximation algorithm for the discrete \F in $\Oh(n\log n+m^2\log m)$ time matching our approximation factor lower bound and essentially the time lower bound by~\cite{BOS18}. Secondly, we match our approximation factor lower bound in the $L_\infty$-space by giving a simple $(3+\eps)$-approximation algorithm for the discrete and continuous \F using $\Oh((n+m^2)\log n)$ time for curves in any dimension under any $L_p$-norm. 
\Cref{tab:overviewResults} summarizes the results of this paper and related work.

\section{Preliminaries}
For any two points $p_1, p_2\in \mathbb{R}^d$, $\overline{p_1 p_2}$ denotes the directed line segment connecting $p_1$ with~$p_2$. 
A polygonal curve $P$ of complexity $n$ is formed by ordered line segments $\overline{P(i) P(i+1)}$ of points $P(1), P(2), \dotso, P(n)$, where $P(i)\in \mathbb{R}^d$. 
It can be viewed as a function $P:[1, n]\rightarrow \mathbb{R}^d$, where $P(i+\alpha)=(1-\alpha)P(i)+\alpha P(i+1)$ for $i\in \{1, \dotso, n\}$ and $\alpha\in[0,1]$. This curve is also denoted with $\langle P(1), \dotso, P(n)\rangle$. We call the points $P(i)$ vertices and the line segments $\overline{P(i) P(i+1)}$ edges of $P$. Further, $P[s, t]$ is the subcurve of~$P$ obtained from restricting the domain to the interval~$[s, t]$. Two curves $P=\langle P(1), \dots, P(n)\rangle$ and $Q=\langle Q(1), \dots, Q(m)\rangle$ can be composed into the curve $P\circ Q=\langle P(1), \dots, P(n), Q(1), \dots, Q(m)\rangle$.
For a number~$k\in \mathbb{N}_0$, define $P^k$ be the composition of $k$ copies of $P$. 

Let $P: [1,n]\rightarrow \mathbb{R}^d$ and ${Q:[1,m]\rightarrow \mathbb{R}^d}$ be two polygonal curves. 
 A \emph{continuous matching}~$M$ between~$P$ and~$Q$ is defined by two functions $h_P\in \mathcal{F}_n$ and $h_Q\in \mathcal{F}_m$ and consists of the tuples $(h_P(a), h_Q(a))$ for $a\in [0,1]$, where $\mathcal{F}_n$ is the set of all continuous, non-decreasing functions $h: [0,1]\rightarrow [1,n]$ with ${h(0)=1}$ and $h(1)=n$, respectively $\mathcal{F}_m$ for~$m$. 
 Then, the \emph{(continuous) \F} between~$P$ and~$Q$ is defined as
\[\dF(P, Q)=\min_{\text{continuous matching }M}\ \max_{(x,y)\in M} \| P(x)-Q(y)\|.\]
For a fixed~$\delta$, the \emph{free space} $F_\delta(P, Q)$ is a subset of $[1, n]\times [1, m]$ such that $(x, y)\in F_\delta(P, Q)$ if and only if $\|P(x)-Q(y)\|\leq \delta$. Points in $([1, n]\times [1, m])\setminus F_\delta(P, Q)$ are contained in the non-free space. Hence, the continuous \F between $P$ and $Q$ is at most~$\delta$ if a continuous matching $M\subset F_\delta(P, Q)$ between $P$ and $Q$ exists.

A \emph{discrete matching}~$M$ between $P$ and $Q$ is an ordered sequence that starts in $(1,1)$, ends in $(n,m)$ and for any consecutive tuples $(i, j)$, $(i', j')$ in $M$ it holds that $(i', j')\in \{(i+1, j), (i, j+1), (i+1, j+1)\}$. Then, the \emph{discrete Fréchet distance} between $P$ and $Q$ is 
\[\ddF(P, Q)=\min_{\text{discrete matching } M}\ \max_{(i,j)\in M} \|P(i)-Q(j)\|.\]
The \emph{free space matrix} $M_{\delta}$ is an $n\times n$ matrix, where $M_{\delta}[i,j]=1$ if $|P(i)-Q(j)|\leq \delta$ and otherwise $M_{\delta}[i,j]=0$. We say that $(i,j)$ is \emph{reachable} if $\ddF(P[1, i], Q[1, j])\leq \delta$.

\begin{observation}\label{o:concatanation}
    Let $P_1, P_2, Q_1, Q_2$ be polygonal curves.
    \begin{itemize}
        \item If $\dF(P_1, Q_1)\leq \delta$ and ${\dF(P_2, Q_2)\leq\delta}$, then ${\dF(P_1\circ P_2,Q_1\circ Q_2)\leq \delta}$. 
        \item If $\ddF(P_1, Q_1)\leq \delta$ and $\ddF(P_2, Q_2)\leq\delta$, then ${\ddF(P_1\circ P_2,Q_1\circ Q_2)\leq \delta}$.
    \end{itemize}
\end{observation}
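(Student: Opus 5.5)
The plan is to prove both items by concatenating optimal matchings. The continuous case uses reparametrizations; the discrete case concatenates the two matching sequences. Write $P_1:[1,n_1]\to\mathbb{R}^d$, $P_2:[1,n_2]\to\mathbb{R}^d$, $Q_1:[1,m_1]\to\mathbb{R}^d$ and $Q_2:[1,m_2]\to\mathbb{R}^d$. Then $P_1\circ P_2$ is a curve on $[1,n_1+n_2]$: the part $[1,n_1]$ traces $P_1$, the edge on $[n_1,n_1+1]$ joins $P_1(n_1)$ to $P_2(1)$, and the part $[n_1+1,n_1+n_2]$ traces $P_2$ shifted by $n_1$. The analogous description holds for $Q_1\circ Q_2$.

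\textbf{Continuous case.} Take matchings $(h_{P_1},h_{Q_1})$ and $(h_{P_2},h_{Q_2})$ realizing distance at most $\delta$. The minimum in the definition is attained by a standard compactness argument; if one prefers not to rely on that, work with $\delta+\eta$ for every $\eta>0$ and let $\eta\to 0$. Define $h$ on $[0,1]$ in three phases.
\begin{itemize}
\item On $[0,1/3]$, run $(h_{P_1},h_{Q_1})$ rescaled.
\item On $[1/3,2/3]$, move linearly from $(n_1,m_1)$ to $(n_1+1,m_1+1)$, so that both connecting edges are traversed simultaneously.
\item On $[2/3,1]$, run $(h_{P_2}+n_1,h_{Q_2}+m_1)$ rescaled.
\end{itemize}
The resulting functions are continuous, non-decreasing and have the correct endpoints. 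In the first and third phases the pairwise distances are at most $\delta$ by assumption. In the middle phase the matched points are $(1-t)P_1(n_1)+tP_2(1)$ and $(1-t)Q_1(m_1)+tQ_2(1)$. Their difference is $(1-t)(P_1(n_1)-Q_1(m_1))+t(P_2(1)-Q_2(1))$, so by the triangle inequality and homogeneity of the norm it has norm at most $(1-t)\delta+t\delta=\delta$. Both endpoint distances are at most $\delta$ because every matching contains the endpoint pairs.

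\textbf{Discrete case.} Concatenate the discrete matching $M_1$ for $(P_1,Q_1)$ with $M_2$ shifted by $(n_1,m_1)$. The step from $(n_1,m_1)$ to $(n_1+1,m_1+1)$ is an allowed diagonal move, and every pair in the result has distance at most $\delta$.

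\textbf{Main obstacle.} The only nontrivial point is the connecting edge in the continuous case. Traversing the two connecting edges synchronously, combined with convexity of the norm, resolves it.
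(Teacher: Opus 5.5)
Your proof is correct. The paper states this as an observation and gives no argument, and your construction is exactly the standard justification it implicitly relies on: you run the two given matchings one after the other, traverse the two connecting edges synchronously so that the distance there is bounded by a convex combination of the two endpoint distances, and in the discrete case use a single diagonal step.
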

\begin{observation}\label{obs:discreteContinuous}
    Let $P$ and $Q$ be polygonal curves, then $\dF(P,Q)\leq \ddF(P, Q)$.
\end{observation}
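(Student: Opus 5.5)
The approach is to turn an optimal discrete matching into a continuous one of the same cost. We work with the vertex-indexed definitions from the preliminaries. Let $\delta=\ddF(P,Q)$, and fix a discrete matching $(i_1,j_1)=(1,1),(i_2,j_2),\dots,(i_k,j_k)=(n,m)$ such that $\|P(i_t)-Q(j_t)\|\le\delta$ for every $t$. We build a continuous matching $M$ with $M\subseteq F_\delta(P,Q)$. By the definition of $\dF$ as a minimum over continuous matchings, this gives $\dF(P,Q)\le\delta$.

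\textbf{Construction.} Split $[0,1]$ into $k-1$ equal subintervals $I_t=[\tfrac{t-1}{k-1},\tfrac{t}{k-1}]$ for $t=1,\dots,k-1$. On $I_t$, let $h_P$ go linearly from $i_t$ to $i_{t+1}$ and let $h_Q$ go linearly from $j_t$ to $j_{t+1}$. Since consecutive pairs differ by $(1,0)$, $(0,1)$ or $(1,1)$, both functions are continuous and non-decreasing. They also satisfy $h_P(0)=1$, $h_P(1)=n$, $h_Q(0)=1$ and $h_Q(1)=m$, so they lie in $\mathcal{F}_n$ and $\mathcal{F}_m$. If $k=1$, then $n=m=1$ and the claim is trivial.

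\textbf{Cost bound.} Fix $a\in I_t$ and write $a$ as the parameter $\lambda\in[0,1]$ along $I_t$. Because the curves are piecewise linear and each index moves inside a single edge (or stays on a vertex), we have $P(h_P(a))=(1-\lambda)P(i_t)+\lambda P(i_{t+1})$, and the same holds for $Q$. Therefore
\[ P(h_P(a))-Q(h_Q(a))=(1-\lambda)\bigl(P(i_t)-Q(j_t)\bigr)+\lambda\bigl(P(i_{t+1})-Q(j_{t+1})\bigr). \]
By the triangle inequality and homogeneity of the norm, the norm of this difference is at most $(1-\lambda)\delta+\lambda\delta=\delta$. This is just convexity of the norm, and it works in any dimension and any norm.

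The only step that needs care is the interpolation identity in the three step types. For a horizontal step, $h_Q$ is constant at $j_t=j_{t+1}$, so $Q(h_Q(a))=Q(j_t)$, which is consistent with the formula. The diagonal step is the interesting case: both curves traverse one edge each at the same rate, and synchronizing the two parametrizations is exactly what makes the convex-combination identity hold. There is no real obstacle here.
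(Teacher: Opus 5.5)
Your proof is correct: linearly interpolating each step of an optimal discrete matching gives a valid continuous matching. At every parameter, the distance between the two matched points is a convex combination of two vertex distances, so it is at most $\ddF(P,Q)$. The paper states this as an observation without proof, and your interpolation argument is the standard justification it implicitly relies on.
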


\section{Approximation Lower Bounds for the Imbalanced Case}
In this section, we show a lower bound for the discrete \F in 1D and for the discrete and continuous \F in 2D depending on OVH.
The construction for both bounds is very similar and we start with giving a schematic view of the reduction.

Let $U,V\subset \{0,1\}^d$ be two sets of size~$m$ and~$n$, where $m\leq n$ and denote with $v_{i k}$ (resp.~$u_{ik}$) the $k$-th bit of the $i$-th vector of $V$ (resp. $U$).
We construct two 1-dimensional (resp. 2-dimensional) curves~$P$ and~$Q$ of complexity $\mathcal{O}(nd)$ and $\mathcal{O}(md)$, where~$P$ depends on~$V$ and~$Q$ on~$U$. We show that their discrete (and continuous) Fréchet distance is at most~$1$ if and only if two vectors that are orthogonal exist in~$V$ and~$U$. 
Define
\[P=P_s\circ(P^*)^{m-1}\circ P_{null}\circ\bigcirc_{i=1}^{n}\left( (P_0)^{d+1}\circ P_1\circ \bigcirc_{\ell=1}^{d}P_{v_{i\ell}}\circ P_1\right)\circ P_{null}\circ(P^*)^{m-1}\circ P_s,\]
\[Q=(Q^*)^{m-1} \circ\bigcirc_{j=1}^{m}\left(Q_{null}\circ Q_c\circ (Q_1)^{d+1}\circ Q_0\circ \bigcirc_{\ell=1}^{d}Q_{u_{j\ell}}\circ Q_0\circ Q_c\right)\circ Q_{null}\circ (Q^*)^{m-1},\]
\addtocounter{linenumber}{-3}
where the subcurves are defined later.
Since the complexity of~$P$ can be arbitrarily larger than the complexity of $Q$, we construct the curve~$P$ such that the main part of~$P$, i.e., in between the two subcurves $P_{null}$, is contained in a ball of radius~$1$. Then, we can match this whole part to a single point on~$Q$. This point is $Q_c$ in our construction.
At a high level idea, we want to enforce that in a matching that realizes the \F, the two subcurves~$P_{null}$ get roughly matched to different subcurves~$Q_{null}$. Then, the subcurve in between the two points matched to~$P_{null}$ contains
$(Q_1)^{d+1}$ as a subcurve. 
Now, we enforce that $(Q_1)^{d+1}$ gets roughly matched to $(P_0)^{d+1}$. Hence, afterwards $\bigcirc_{k=1}^{d}P_{v_{ik}}$ and $\bigcirc_{\ell=1}^{d}Q_{u_{j\ell}}$ have to be matched together for some $i$ and $j$.
We will show that this is only possible if $\langle v_i, u_j\rangle=0$. This matching is schematically visualized in \Cref{fig:LowerBoundSchematicView}.
\begin{figure}
    \centering
    \includegraphics[page=2, width=\textwidth]{Figures/LowerBoundSchematicView.pdf}
    \caption{A schematic view of the construction of the curves $P$ and $Q$. Here, $Q_{null}$ is drawn in red, $P_0$ and $Q_0$ in green,  $P_1$ and  $Q_1$ in blue, and  $Q_c$ in orange.}
    \label{fig:LowerBoundSchematicView}
\end{figure}
We define
\begin{align*}
    &P_s=\langle(-1,0)\rangle,
    && P_{null}=\langle (-5,0) \rangle,
    &&P^*=\langle (-3,0)\rangle\circ\langle P_0\rangle^{2d+3}\circ \langle (-3,0)\rangle, \\
    & Q_c=\langle(0,0)\rangle,
    && Q_{null}=\langle (-4,0)\rangle, 
    && Q^*=\langle (-2,0),(0,0),(-2,0)\rangle,
\end{align*}
 for 2D and the definition carries over directly to 1D.
The curves $P_0$, $P_1$, $Q_0$, and $Q_1$ are visualized in \Cref{fig:construction_P0P1Q0Q1}. For the 1D setting, let
\begin{align*}
    & P_0=\langle -1, 1\rangle,
    &&P_1=\langle -1, 0\rangle,& Q_0=\langle -2,1\rangle,
    && Q_1=\langle -2, 2\rangle.
\end{align*}

For the 2D constructions, define eight points with distance $1$ to $(0,0)$ where $a_1=(-1, 0)$ and the direction of $a_{i+1}$ is the direction of $a_i$ rotated by $\pi/4$, i.e., in~$L_2$
\begin{align*}
    &a_1=(-1, 0), &&a_2=(-1/\sqrt{2}, -1/\sqrt{2}), &&a_3=(0, -1),&&a_4=(1/\sqrt{2}, -1/\sqrt{2}),
    \\&a_5=(1, 0),&&a_6=(1/\sqrt{2}, 1/\sqrt{2}), &&a_7=(0, 1),&&a_8=(-1/\sqrt{2}, 1/\sqrt{2}),
\end{align*}
and in~$L_\infty$
\begin{align*}
    &a_1=(-1, 0), &&a_2=(-1, -1), &&a_3=(0, -1),&&a_4=(1, -1),
    \\&a_5=(1, 0),&&a_6=(1, 1), &&a_7=(0, 1),&&a_8=(-1, 1).
\end{align*}
Define $c=\sqrt{2}$ and $\tilde{a}=a_8+a_7$ in $L_2$ and $c=2$ and $\tilde{a}=2a_8$ in $L_\infty$. Note that~$c$ is defined such that $ca_1$ is the point with distance 1 to~$a_8$ and~$a_2$ that maximizes the distance to~$(0,0)$ and $\tilde{a}$ such that $\|\tilde{a}-a_2\|=1+c$ and $\|\tilde{a}-a_8\|=1$.
Then we set
\begin{align*}
    & P_0=\langle a_1, a_2,\dots a_7, a_8, a_7\dots, a_1\rangle,
    &&P_1=\langle a_1, a_2, \dots, a_8, a_1, a_2, a_1 a_8,  \dots, a_1\rangle,\\
    & Q_0=\langle 2a_1, 2a_2, \dots, 2a_8, ca_1, 2a_8, \dots, 2a_1\rangle,
    &&Q_1=\langle 2a_1, 2a_2, \dots, 2a_7, \tilde{a}, 2a_7, \dots, 2a_1\rangle.
\end{align*}
We show that this results in an inapproximability result of $(1+c-\eps)$ for time in $\Oh((nm)^{1-\delta})$.

\begin{figure}
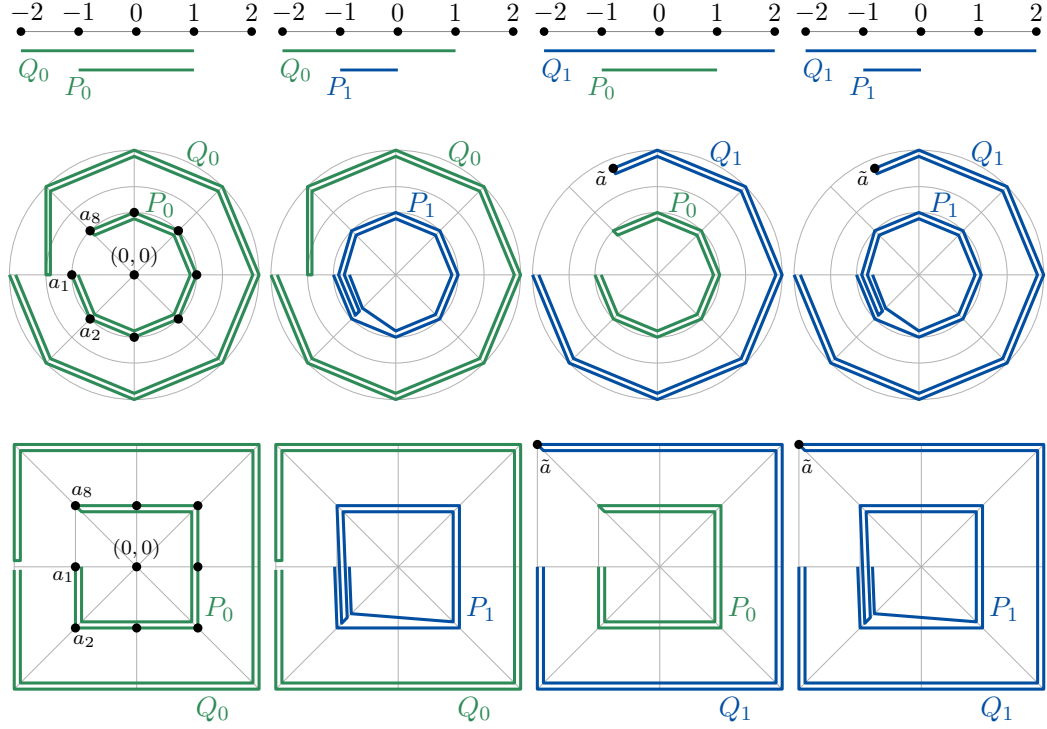

    \centering
    \includegraphics[page=10, width=0.24\linewidth]{Figures/Lower_Bound_Curves.pdf}\ 
    \includegraphics[page=11, width=0.24\linewidth]{Figures/Lower_Bound_Curves.pdf}\ 
    \includegraphics[page=12, width=0.24\linewidth]{Figures/Lower_Bound_Curves.pdf}\ 
    \includegraphics[page=13, width=0.24\linewidth]{Figures/Lower_Bound_Curves.pdf}\\
    \ \\
    \centering
    \includegraphics[page=1, width=0.24\textwidth]{Figures/Lower_Bound_Curves.pdf}\ 
    \includegraphics[page=2, width=0.24\textwidth]{Figures/Lower_Bound_Curves.pdf}\ 
    \includegraphics[page=3, width=0.24\textwidth]{Figures/Lower_Bound_Curves.pdf}\ 
    \includegraphics[page=4, width=0.24\textwidth]{Figures/Lower_Bound_Curves.pdf}\\
    \ \\
    \includegraphics[page=5, width=0.24\textwidth]{Figures/Lower_Bound_Curves.pdf}
    \includegraphics[page=6, width=0.24\textwidth]{Figures/Lower_Bound_Curves.pdf}\ 
    \includegraphics[page=7, width=0.24\textwidth]{Figures/Lower_Bound_Curves.pdf}\ 
    \includegraphics[page=8, width=0.24\textwidth]{Figures/Lower_Bound_Curves.pdf}\ 
    \caption{The curves $P_0$, $P_1$, $Q_0$, and $Q_1$. The upper curves are defined in 1D, the middle curves 
    for the $L_2$-metric in 2D and the lower curves for the $L_\infty$-metric in 2D.}
    \label{fig:construction_P0P1Q0Q1}
\end{figure}

\begin{lemma}\label{lem:LBdirection1}
    If there exist indices $i$ and $j$ such that $\langle v_i, u_j\rangle=0$, then in every construction for $P$ and $Q$ above it holds that $\ddF(P, Q)\leq 1$.
\end{lemma}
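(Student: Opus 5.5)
We will construct an explicit discrete matching of cost at most $1$ by splitting both curves into consecutive blocks, matching the blocks pairwise, and then gluing the pieces with \Cref{o:concatanation}. Fix $i,j$ with $\langle v_i,u_j\rangle=0$. Write $Q=(Q^*)^{m-1}\circ Q^{(1)}\circ\dots\circ Q^{(m)}\circ Q_{null}\circ(Q^*)^{m-1}$, where $Q^{(k)}=Q_{null}\circ Q_c\circ(Q_1)^{d+1}\circ Q_0\circ\bigcirc_\ell Q_{u_{k\ell}}\circ Q_0\circ Q_c$ is the $k$-th gadget. Likewise write $P$'s middle part as $P_{null}\circ B_1\circ\dots\circ B_n\circ P_{null}$, where $B_{i'}=(P_0)^{d+1}\circ P_1\circ\bigcirc_\ell P_{v_{i'\ell}}\circ P_1$.

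The global plan is as follows.
\begin{enumerate}
\item Match $P_s\circ(P^*)^{m-1}$ to $(Q^*)^{m-1}\circ Q^{(1)}\circ\dots\circ Q^{(j-1)}$ together with the initial $Q_{null}$ of $Q^{(j)}$, which is matched to $P_{null}$.
\item Match $B_1\circ\dots\circ B_{i-1}$ and the prefix $(P_0)^{d+1}$ of $B_i$ to $Q_c$ followed by $(Q_1)^{d+1}$.
\item Match the rest of $B_i$ (that is, $P_1\circ\bigcirc_\ell P_{v_{i\ell}}\circ P_1$) to $Q_0\circ\bigcirc_\ell Q_{u_{j\ell}}\circ Q_0$.
\item Match $B_{i+1}\circ\dots\circ B_n$ to the final $Q_c$ of $Q^{(j)}$.
\item Match the second $P_{null}$ to the next $Q_{null}$.
\item Match the remaining $(P^*)^{m-1}\circ P_s$ to $Q^{(j+1)}\circ\dots\circ Q_{null}\circ(Q^*)^{m-1}$ symmetrically to step 1.
\end{enumerate}

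The elementary facts needed are distance checks in 1D, $L_2$ and $L_\infty$ simultaneously. All vertices of $P_0$, $P_1$ lie in the unit ball around $(0,0)=Q_c$, which justifies steps 2 (the $B$-prefix part) and 4. Also $\|P_{null}-Q_{null}\|=1$.

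For step 1 the count works out because $Q$ has $m-1+(j-1)\le 2m-2$ "excursions" to absorb, while $P$ offers $m-1$ copies of $P^*$, each of which can absorb roughly two. Concretely, one copy of $P^*$ (vertex $(-3,0)$, then $(P_0)^{2d+3}$, then $(-3,0)$) matches one $Q^*=\langle(-2,0),(0,0),(-2,0)\rangle$: the points $(-3,0)$ and $(-2,0)$ are at distance $1$, and $(P_0)^{2d+3}$ lies within $1$ of $(0,0)$. The point $(-3,0)$ is within $1$ of $(-4,0)=Q_{null}$ and $(-2,0)$. We also need that one $P^*$ can absorb a whole gadget $Q^{(k)}$, pairing $(-3,0)$ with $Q_{null}$, and $(P_0)^{2d+3}$ vertex-wise against $Q_c\circ(Q_1)^{d+1}\circ Q_0\circ\bigcirc Q_{u}\circ Q_0\circ Q_c$. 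This requires showing $\ddF(P_0,Q_b)\le 1$ for $b\in\{0,1\}$ and handling the $Q_c$ ends; the extra copies of $P_0$ give slack. Since $m-1$ copies must absorb $m-1$ of $Q^*$ plus $j-1$ gadgets, we use the freedom that consecutive $P^*$ share $(-3,0)$ endpoints, and that leftover $P^*$ can be matched to a single point $(-2,0)$ or to $(0,0)$ stretches. I would verify the precise bookkeeping carefully, possibly letting one $P^*$ cover a $Q^*$ and a gadget at once. The starting point $P_s=(-1,0)$ is within $1$ of the first vertex $(-2,0)$ of $Q^*$.

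The main obstacle is the pairwise gadget distances, chiefly step 3. We need $\ddF(P_0,Q_1)\le1$, $\ddF(P_0,Q_0)\le1$, $\ddF(P_1,Q_0)\le1$, $\ddF(P_0,Q_0)\le 1$, and $\ddF(P_1,Q_1)\le 1$; crucially, $P_1$ versus $Q_1$ is \emph{not} needed, which is exactly what orthogonality avoids. In 1D these reduce to interval checks: for instance, $P_1=\langle-1,0\rangle$ vs. $Q_0=\langle-2,1\rangle$ has distances $1,1$, and $P_0=\langle-1,1\rangle$ vs. $Q_1=\langle-2,2\rangle$ has distances $1,1$. In 2D, $a_k$ versus $2a_k$ gives distance $1$, and the special points $ca_1$ and $\tilde a$ are matched using $\|ca_1-a_8\|=\|ca_1-a_2\|=1$ and $\|\tilde a-a_8\|=1$, following the vertex sequences of $P_0,P_1$ in the figure. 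I would check each of these by writing the explicit vertex pairing. For step 2, $(P_0)^{d+1}$ vs. $(Q_1)^{d+1}$ follows from $\ddF(P_0,Q_1)\le1$ and \Cref{o:concatanation}, and in step 3 the bit-wise matching uses $\ddF(P_{v},Q_{u})\le 1$ whenever $v u=0$.
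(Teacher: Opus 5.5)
Your decomposition and the gadget checks (steps 2--5, and $\ddF(P_x,Q_y)\le 1$ whenever $xy=0$) agree with the paper. However, step~1 (and its mirror, step~6) is not established, and the mechanisms you suggest for it do not work.

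First, one copy of $P^*$ cannot cover a $Q^*$ and a gadget at once. If the $(0,0)$ of $Q^*$ is matched inside that $P^*$, it is matched in the $(P_0)^{2d+3}$ part. Then the gadget's $Q_{null}=(-4,0)$ can only go to the trailing $(-3,0)$, and the gadget's curves $Q_y$ must use the copies of $P_0$ in the next $P^*$.

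Second, a leftover $P^*$ cannot be matched to the single point $(-2,0)$ or to a stretch of $(0,0)$'s. It contains $a_5$ (resp.\ the value $1$ in 1D), which is at distance $3$ from $(-2,0)$, and it contains $(-3,0)$, which is at distance $3$ from $(0,0)$.

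Third, a gadget has no slack. It contains exactly $(d+1)+1+d+1=2d+3$ curves $Q_y$, one per copy of $P_0$. The two $Q_c$ are paired with the first and last vertex of $(P_0)^{2d+3}$, and the trailing $(-3,0)$ of each $P^*$ is paired with the next $Q_{null}$.

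So each of the first $j-1$ gadgets uses up a whole $P^*$, and your accounting (each $P^*$ absorbing roughly two excursions) cannot work. For $j=m$ no copy of $P^*$ is left on the left side, so every $(0,0)$ of $(Q^*)^{m-1}$ must be matched to $P_s$.

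The missing observation is that $P_s=(-1,0)$ is within distance $1$ of both $(-2,0)$ and $(0,0)$, i.e.\ of every vertex of $Q^*$. Hence the single point $P_s$ can be matched to any number of copies of $Q^*$. With this the bookkeeping is the paper's:
\begin{itemize}
\item match $P_s\circ(P^*)^{m-j}$ to $(Q^*)^{m-1}$, where $P_s$ takes the first $j-1$ copies and each remaining $P^*$ takes one copy;
\item match $(P^*)^{j-1}\circ P_{null}$ to $Q^{(1)}\circ\dots\circ Q^{(j-1)}\circ Q_{null}$, one gadget per $P^*$;
\item symmetrically, match $P_{null}\circ(P^*)^{m-j}$ to $Q^{(j+1)}\circ\dots\circ Q^{(m)}\circ Q_{null}$;
\item and match $(P^*)^{j-1}\circ P_s$ to $(Q^*)^{m-1}$.
\end{itemize}
(A minor slip: your list of required pairs includes $\ddF(P_1,Q_1)\le 1$, which is false; as you note yourself, it is not needed.)
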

\begin{proof}
    For every construction of $P$ and $Q$, it holds that $\ddF(P^*, Q^*)\leq 1$, $\ddF(P_s, Q^*)\leq 1$. 
    Further, for $x, y\in \{0,1\}$ with $\langle x, y\rangle=0$, it holds that $\ddF(P_x, Q_y)\leq1$.
    Then, we can match the following with \F at most 1:
    \begingroup
    \setlength{\mathindent}{0pt}
    \begin{align*}
        \lipicsmathdash~&P_s\circ (P^*)^{m-j}&\text{ and }&&
        &(Q^*)^{m-1},
        \\
        \lipicsmathdash~&(P^*)^{j-1}\circ P_{null}&\text{ and }&&
        &\bigcirc_{k=1}^{j-1}(Q_{null}\circ Q_c\circ (Q_1)^{d+1}\circ Q_0\circ \\
        &&&&&~~~~~~~~\bigcirc_{\ell=1}^{d}Q_{u_{k\ell}}\circ Q_0\circ Q_c ){\circ Q_{null}}, ~~~~~
        \\
        \lipicsmathdash~&\bigcirc_{k=1}^{i-1}\left((P_0)^{d+1}\circ P_1\circ \bigcirc_{\ell=1}^{d}P_{v_{k\ell}}\circ P_1\right)&\text{ and }&&
        & Q_c, \\
        \lipicsmathdash~&(P_0)^{d+1}\circ P_1\circ \bigcirc_{\ell=1}^{d}P_{v_{i\ell}}\circ P_1&\text{ and }&&
        &(Q_1)^{d+1}\circ Q_0\circ \bigcirc_{\ell=1}^{d}Q_{u_{j\ell}}\circ Q_0,
        \\
        \lipicsmathdash~&\bigcirc_{k=i+1}^{n}\left((P_0)^{d+1}\circ P_1\circ \bigcirc_{\ell=1}^{d}P_{v_{k\ell}}\circ P_1\right)&\text{ and }&&
        &Q_c,
        \\
        \lipicsmathdash~&P_{null}\circ(P^*)^{m-j}&\text{ and }&&
        &\bigcirc_{k=j+1}^{m}(Q_{null}\circ Q_c\circ (Q_1)^{d+1}\circ Q_0\circ \\
        &&&&&~~~~~~~~\bigcirc_{\ell=1}^{d}Q_{u_{k\ell}}\circ Q_0\circ Q_c){\circ Q_{null}},
        \\
        \lipicsmathdash~&(P^*)^{j-1}\circ P_s&\text{ and }&&
        &(Q^*)^{m-1}.
    \end{align*}
    \endgroup
    By~\Cref{o:concatanation}, it follows that $\ddF(P, Q)\leq 1$.
\end{proof}

To show the other direction, we first prove some lemmas.
\begin{restatable}{lemma}{lemDdiscreteLB}
\label{lem:1DdiscreteLB}
    Consider the 1D setting and let $Q'=\bigcirc_{\ell=1}^{d'}Q_{\sigma_\ell}$, where $\sigma_{\ell}\in\{0,1\}$ for every~$\ell$. Further, let $s\leq t$ be such that $P[s, t]$ is a subcurve in between the two subcurves~$P_{null}$ and $\ddF(Q', P[s,t])<2$. Then
    $P[s,t]=\bigcirc_{\ell=1}^{d'}P_{\pi_\ell}$ where $\pi_{\ell}=0$ if $\sigma_\ell=1$ and $\pi_{\ell}\in\{0,1\}$ if $\sigma_\ell=0$ for every~$\ell$. 
\end{restatable}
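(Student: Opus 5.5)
The plan is to exploit integrality and reduce everything to a rigid vertex-by-vertex correspondence.

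\textbf{Step 1 (reduce to distance at most $1$).} In the 1D setting every vertex of $P$ strictly between the two subcurves $P_{null}$ lies in $\{-1,0,1\}$, since each block is $P_0=\langle -1,1\rangle$ or $P_1=\langle -1,0\rangle$. Every vertex of $Q'$ lies in $\{-2,1,2\}$. All coordinates are integers, so $\ddF(Q',P[s,t])<2$ means there is a discrete matching in which every matched pair has distance at most $1$. I also read $P[s,t]$ as a subcurve between vertices, with $s,t$ vertex indices, since we are in the discrete setting.

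\textbf{Step 2 (the part of $P$ between the two $P_{null}$ strictly alternates).} This part is a concatenation of blocks $\langle -1,x\rangle$ with $x\in\{0,1\}$. Hence its vertex sequence alternates between the value $-1$ and a value in $\{0,1\}$. Moreover, the value $-1$ occurs exactly at the first vertex of each block.

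\textbf{Step 3 (what each vertex of $Q'$ may be matched to).} The vertex sequence of $Q'$ is $-2,y_1,-2,y_2,\dots,-2,y_{d'}$, where $y_\ell=2$ if $\sigma_\ell=1$ and $y_\ell=1$ if $\sigma_\ell=0$. Within distance $1$:
\begin{itemize}
\item a vertex $-2$ can only be matched to $P$-vertices of value $-1$;
\item a vertex $2$ can only be matched to $P$-vertices of value $1$;
\item a vertex $1$ can only be matched to $P$-vertices of value $0$ or $1$.
\end{itemize}
Consecutive vertices of $Q'$ are always one $-2$ and one $y_\ell\geq 1$. No point of $\{-1,0,1\}$ is within distance $1$ of both $-2$ and a value $\geq 1$. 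So no $P$-vertex is matched to two consecutive $Q'$-vertices.

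By monotonicity of discrete matchings, the matching therefore partitions the vertex sequence of $P[s,t]$ into $2d'$ consecutive nonempty runs. The $k$-th run is the set of $P$-vertices matched to the $k$-th vertex of $Q'$.

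\textbf{Step 4 (each run is a single vertex).} A run belonging to a $-2$ consists only of vertices of value $-1$. A run belonging to some $y_\ell$ consists only of nonnegative values. By the alternation of Step 2, every such contiguous run has length exactly one.

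It follows that $P[s,t]$ has exactly $2d'$ vertices and $P(s)=-1$. Since $-1$ occurs only at block starts, $P[s,t]$ is exactly the concatenation of $d'$ consecutive blocks $P_{\pi_1},\dots,P_{\pi_{d'}}$.

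\textbf{Step 5 (reading off the constraint on $\pi_\ell$).} The second vertex of block $\ell$ is matched to $y_\ell$.
\begin{itemize}
\item If $\sigma_\ell=1$, then $y_\ell=2$, so that vertex must equal $1$, which forces $\pi_\ell=0$.
\item If $\sigma_\ell=0$, then $y_\ell=1$, and both $0$ and $1$ are admissible, so $\pi_\ell\in\{0,1\}$.
\end{itemize}

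\textbf{Main obstacle.} The delicate part is Step 3: arguing that the matching cannot reuse a $P$-vertex across consecutive $Q'$-vertices, and that runs cannot skip over vertices. This rests entirely on the value gap: $-2$ and any $y_\ell\geq 1$ have no common neighbour within distance $1$ among $\{-1,0,1\}$. I would state this carefully, using only the step rule $(i,j)\to\{(i+1,j),(i,j+1),(i+1,j+1)\}$ of discrete matchings.
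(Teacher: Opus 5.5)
Your proof is correct and takes the same route as the paper. The paper's argument is that the $-2$ vertices of $Q'$ can be matched only to the $-1$ vertices of $P[s,t]$ and vice versa, which together with the strict alternation forces a synchronous vertex-by-vertex traversal; your integrality reduction and run-partition argument in Steps 3--4 just make rigorous its one-line claim that ``we must traverse $P[s,t]$ and $Q'$ synchronously.''
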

\begin{proof}
    It holds that every second vertex of $Q'$ is $-2$ and the vertices of value $-1$ on $P[s, t]$ are the only vertices on $P[s, t]$ within distance less than $2$ to those vertices of $Q'$. Further, it holds that the vertices with value $-1$ of $P[s, t]$ have distance less than two only to the vertices of value $-2$ of $Q'$. Hence,
    since the vertices of value $-1$ on $P[s, t]$ are exactly every second vertex on $P[s, t]$, we must traverse $P[s, t]$ and $Q'$ synchronously. Therefore, if $\sigma_{\ell}=0$ then $\pi_{\ell}\in \{0, 1\}$ and if $\sigma_{\ell}=1$ then $\pi_\ell=0$ by construction.
\end{proof}

In 2D, we also want to show a bound for the continuous \F. Then the vertices of $Q$ can also be matched to points on edges of $P$. We use the notation of the following subcurve of $P_0$ (resp. $P_1$): $P_0'=\langle a_4, a_5,\dots, a_8, \dots, a_4\rangle$ (resp. $P_1'=\langle a_4, \dots, a_8, a_1, a_2, a_1, a_8, \dots, a_4\rangle$).

\begin{lemma}\label{lem:2DcontinuousLB_Q01}
    Consider the 2D setting.
    Then, for any subcurve $P'$ of $P$ in between the two subcurves~$P_{null}$ with $\dF(P', Q_y)<1+c$ for $y\in \{0,1\}$, it holds that $P'_x$ is a subcurve of $P'$ and~$P'$ is a subcurve of $\langle a_4, a_3, a_2\rangle\circ P_x\circ \langle a_2, a_3, a_4\rangle$, where $x=0$ if $y=1$ and $x\in \{0, 1\}$ if $y=0$.
\end{lemma}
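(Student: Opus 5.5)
The plan is to use the extremal vertices of $Q_y$, which lie at distance $2$ from the origin, as anchors that force the matched part of $P'$ to pass near them. Between the two copies of $P_{null}$, every vertex of $P$ is some $a_k$, and every point of $P$ lies on a segment $\overline{a_k a_{k'}}$ with $k' \in \{k, k\pm 1\}$ (cyclically), except for the chords $a_8 a_1$ that occur at block boundaries. Each vertex $2a_k$ of $Q_y$ has distance exactly $1$ to $a_k$. In both $L_2$ and $L_\infty$ it has distance at least $1+c$ to the points of $P$ that are far around the octagon from $a_k$; this would be checked by a short metric computation. Hence, if $\dF(P',Q_y)<1+c$, each point matched to $2a_k$ lies on $P$ within the arc near $a_k$, that is, on an edge incident to $a_k$ or close to it.

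The first step is to walk along $Q_y=\langle 2a_1,\dots,2a_7,\ast,2a_7,\dots,2a_1\rangle$ and track the matched positions on $P'$. Because the matching is monotone, the consecutive anchors $2a_1,2a_2,\dots,2a_7$ force $P'$ to contain a contiguous sweep of the octagon from near $a_1$ to near $a_7$, then a turn, then a sweep back. The key case distinction is made at the apex of $Q_y$.

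For $y=1$ the apex is $\tilde a$. Since $\|\tilde a-a_2\|=1+c$, and points of $P$ near $a_1,a_2,a_3$ are too far from $\tilde a$, the apex must be matched near $a_8$. This forces $P'$ to reach $a_8$ and then return. In $P_1$, the detour $a_8,a_1,a_2,a_1,a_8$ would then have to be matched to the single apex $\tilde a$ or to its neighbouring $2a_7$ edges, which is impossible because $a_2$ is at distance $\geq 1+c$ from all of these. So $x=0$.

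For $y=0$ the apex is $ca_1$, which lies within distance $1$ of both $a_8$ and $a_2$. This allows the $P_1$ detour, so $x\in\{0,1\}$. In either case the matching must visit $a_4,\dots,a_8$ and return to $a_4$ using edges of one single block $P_x$, which gives $P'_x\subseteq P'$. It cannot use two blocks, because between blocks $P$ returns to $a_1$, and returning to $a_1$ in the middle of the $2a_4,\dots,2a_7$ stretch is not allowed.

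The second step gives the upper containment. The endpoints of $Q_y$ are $2a_1$, and the points of $P$ within distance $<1+c$ of $2a_1$ on the appropriate sides form a range that extends at most to around $a_4$. So $P'$ cannot extend beyond $a_4,a_3,a_2$ on either side into the neighbouring blocks. This yields $P'\subseteq\langle a_4,a_3,a_2\rangle\circ P_x\circ\langle a_2,a_3,a_4\rangle$.

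The main obstacle is the continuous setting. Vertices of $Q_y$ may be matched to interior points of edges, and the edges of $Q_y$ themselves sweep through space. So the step ``$2a_k$ must be matched near $a_k$'' has to be verified as a precise distance bound of at least $1+c$ between $2a_k$ and every edge of $P$ not incident to a suitable neighbourhood of $a_k$. This must be done separately for $L_2$ (with $c=\sqrt2$) and $L_\infty$ (with $c=2$). I would first tabulate these vertex-to-edge distances and then argue the sweep and block structure combinatorially.
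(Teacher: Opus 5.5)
\paragraph{Gap: the vertex anchors are far too loose.}
Your argument anchors the vertices $2a_k$ of $Q_y$ on $P$ and then invokes monotonicity, but the anchors admit far more placements than you assume. In $L_2$ one has $\|2a_k-a_{k\pm 2}\|=\sqrt5<1+\sqrt2$, so each $2a_k$ may be matched to points spread over five of the eight octagon vertices. In $L_\infty$ the same holds for $2a_1,2a_3,2a_5,2a_7$, e.g.\ $\|2a_1-a_3\|_\infty=\|2a_1-a_7\|_\infty=2<3$. Your claim that points near $a_1$ are too far from $\tilde a$ is also false: $\|\tilde a-a_1\|=\sqrt3$, resp.\ $2$. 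Hence the positions of the vertices of $Q_y$ on $P$ force neither a counterclockwise sweep nor that the ends of $P'$ lie at a block junction.

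Concretely, in $L_2$ let $P'$ run from the apex $a_8$ of a copy of $P_0$ to the vertex $a_7$ on the ascending half of the next block. Assign the vertices of $Q_y$ as follows:
the first $2a_1,2a_2$ go to the starting $a_8$;
the first $2a_3,\dots,2a_6$ go to $a_4$ on the descending half;
every vertex from the first to the last $2a_7$ goes to the junction vertex $a_1$;
the second $2a_6,\dots,2a_2$ go to $a_4$ on the ascending half;
the final $2a_1$ goes to the final $a_7$.
All these distances are at most $\sqrt5$, and the assignment is monotone and respects the endpoints. Yet $P'$ contains no $P'_x$ and is not a subcurve of $\langle a_4,a_3,a_2\rangle\circ P_x\circ\langle a_2,a_3,a_4\rangle$. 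So neither the ``contiguous sweep from near $a_1$ to near $a_7$'' nor the upper containment via the endpoints $2a_1$ follows from what you use. Your proposed remedy, tabulating distances from the $2a_k$ to edges of $P$, only sharpens the same one-sided constraint.

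\paragraph{Missing ingredient: the reverse constraint.}
You must also use, simultaneously, that every vertex of $P'$ is matched to some point of $Q_y$ at distance less than $1+c$, typically an interior point of an edge. In the assignment above, the vertex $a_6$ on the descending half after the start is at distance about $2.77>1+\sqrt2$ from all of $\overline{2a_1\,2a_2}\cup\overline{2a_2\,2a_3}$. So it is matched beyond the first $2a_3$. By monotonicity, that $2a_3$ must then be matched into $\langle a_8,a_7,a_6\rangle$, which lies at distance at least $\sqrt{5+2\sqrt2}>1+\sqrt2$ from it, a contradiction.

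The paper organizes exactly such two-sided constraints as monotone paths through the non-free space of the free-space diagram of $Q_y$ and $P_0\circ P_0\circ P_1$; no feasible matching can cross them. The red paths show that neither endpoint of $Q_y$ can be matched onto a $P'_x$, so both lie on a junction $\langle a_4,a_3,a_2,a_1,a_2,a_3,a_4\rangle$. The orange paths, from $\|ca_1-a_5\|\geq 1+c$, resp.\ $\|\tilde a-a_5\|\geq 1+c$, show that exactly one $P'_x$ is traversed.

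Your last step for $y=1$, that $a_2$ is at distance at least $1+c$ from $2a_7,\tilde a,2a_7$, is the paper's pink path and is fine once this structure is in place. Likewise, your ``no return to $a_1$'' step needs the precise window, since $a_1$ is within $\sqrt5$ of both $2a_3$ and $2a_7$.
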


\begin{figure}
    \centering
    \includegraphics[page=9, width=0.95\textwidth]{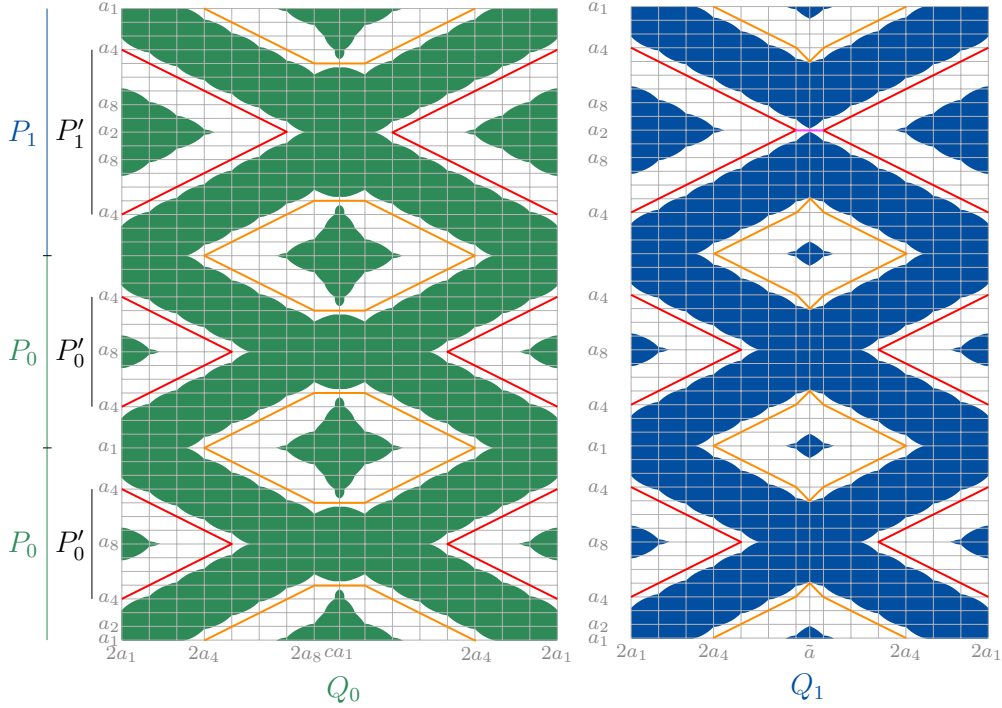}
    \caption{In white: the non-free space of the curve $Q_0$ and $P_0\circ P_0\circ P_1$ on the left and of the curve $Q_1$ and $P_0\circ P_0\circ P_1$ on the right. The red, orange and pink paths visualize the proof of \Cref{lem:2DcontinuousLB_Q01}.}
    \label{fig:freespace}
\end{figure}
\begin{proof}
    We show this lemma by constructing paths in the non-free space for $\delta=1+c$ (see \Cref{fig:freespace}).
    First observe that $\|2a_i-a_{j}\|\geq 1+c$ for any $i, j\in\{1, \dots, 8\}$ with ${i-j\geq 3 \mod 8}$. 

    Assume that the first (resp. last) vertex of $Q_y$ is matched to a point on a subcurve $P'_x$ with $x\in\{0,1\}$. Then, there must exist a path completely contained in the free space ending at the last (resp. first) vertex of $Q_y$. This cannot be the case since there exists a continuous matching between $P'_x$ and $\langle 2a_1, 2a_2\dots, 2a_5, \dots, 2a_1\rangle$ that is completely contained in the non-free space (see the red paths in \Cref{fig:freespace}).
    Therefore, the first as well as the last vertex of $Q_y$ has to be matched to a point on a subcurve~$P''$ of~$P$ such that $P''=\langle a_4, a_3, a_2, a_1, a_2, a_3, a_4\rangle$.

    Next, we show that $P'$ contains exactly one $P'_x$ as a subcurve. Since $\|ca_1-a_5\|\geq 1+c$ (resp. $\|\tilde{a}-a_5\|\geq 1+c$), there exists a path that is completely contained in the non-free space between $\langle a_5, a_4, \dots a_2, a_1, a_2, \dots, a_4, a_5\rangle$ and $\langle 2a_8, ca_1, 2a_8, 2a_7, \dots, 2a_4, 2a_5, \dots, 2a_8, ca_1, 2a_8\rangle$(resp. \linebreak $\langle \tilde{a}, 2a_7, \dots, 2a_4, 2a_5, \dots, \tilde{a}\rangle$) (see the orange paths in \Cref{fig:freespace}). Hence by construction of~$P$, it follows that~$P'_x$ is a subcurve of~$P'$ and~$P'$ is a subcurve of $\langle a_4,a_3, a_2\rangle\circ P_x\circ \langle a_2, a_3, a_4\rangle$ with $x\in \{0,1\}$.
    It remains to show that $x\neq 1$, if $y=1$. This follows since the distance between $a_2$ and $2a_7$, $\tilde{a}$,~$2a_7$ is at least~$1+c$ (see the pink path in \Cref{fig:freespace}).
\end{proof}
The next lemma follows directly by the construction of the curves and \Cref{lem:2DcontinuousLB_Q01}.
\begin{lemma}\label{lem:2DcontinuousLB}
    Consider the 2D setting and let $Q'=\bigcirc_{\ell=1}^{d'}Q_{\sigma_\ell}$, where $\sigma_{\ell}\in\{0, 1\}$ for every~$\ell$. Further, let $s\leq t$ be such that $P[s, t]$ is a subcurve in between the two subcurves~$P_{null}$ and $\dF(Q', P[s,t])<1+c$. Then, $P_{\pi_1}' \circ \bigcirc_{\ell=2}^{d'-1}P_{\pi_\ell}\circ P_{\pi_{d'}}'$ is a subcurve of
    $P[s,t]$ and $P[s, t]$ is a subcurve of $\langle a_4,a_3, a_2\rangle \circ \bigcirc_{\ell=1}^{d'}P_{\pi_\ell}\circ \langle a_2, a_3, a_4\rangle$ where $\pi_{\ell}=0$ if $\sigma_\ell=1$ and $\pi_{\ell}\in\{0, 1\}$ if $\sigma_\ell=0$ for every~$\ell$. 
\end{lemma}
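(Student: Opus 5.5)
Fix a continuous matching $M$ between $Q'$ and $P[s,t]$ whose pairs all lie at distance less than $1+c$. Split $Q'$ at the shared vertices $2a_1$ between consecutive blocks $Q_{\sigma_\ell}$. The plan is to apply \Cref{lem:2DcontinuousLB_Q01} to each block and glue the resulting pieces of $P[s,t]$ together.

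\textbf{Localising each block.} Let $P^{(\ell)}$ be the subcurve of $P[s,t]$ that $M$ matches to $Q_{\sigma_\ell}$. These pieces are ordered along $P[s,t]$, and consecutive pieces share at most their common endpoint, namely the point matched to the junction vertex $2a_1$. Every piece lies between the two $P_{null}$, and its Fréchet distance to $Q_{\sigma_\ell}$ is below $1+c$. \Cref{lem:2DcontinuousLB_Q01} then applies: $P^{(\ell)}$ contains some $P'_{\pi_\ell}$ and is contained in $\langle a_4,a_3,a_2\rangle\circ P_{\pi_\ell}\circ\langle a_2,a_3,a_4\rangle$. The lemma also gives the constraint $\pi_\ell=0$ whenever $\sigma_\ell=1$.

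\textbf{Gluing the pieces.} The middle part of $P$ is built only from copies of $P_0$ and $P_1$, which meet at the vertex $a_1$. Each piece therefore essentially covers one copy $P_{\pi_\ell}$. It may extend at most to $a_4$ on either side, that is, into the tail of the preceding copy or the head of the next copy, which are the $\langle a_1,\dots,a_4\rangle$ portions.

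Consecutive pieces meet at a single point. That point is near $a_1$: it lies on the $\langle a_4,a_3,a_2,a_1,a_2,a_3,a_4\rangle$ stretch around a junction, by the first part of the proof of \Cref{lem:2DcontinuousLB_Q01}. Two consequences follow.
\begin{itemize}
\item The copies $P'_{\pi_\ell}$ found for consecutive $\ell$ are distinct.
\item They are adjacent in $P$, because no full $P'_x$ fits in the gap. Any such copy would have to lie inside a piece, which would then contain two copies $P'_x$, contradicting the containment statement.
\end{itemize}
Hence the copies $P_{\pi_1},\dots,P_{\pi_{d'}}$ appear consecutively in $P$. Combining the containments yields that $P[s,t]$ is a subcurve of $\langle a_4,a_3,a_2\rangle\circ\bigcirc_\ell P_{\pi_\ell}\circ\langle a_2,a_3,a_4\rangle$.

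\textbf{Lower containment.} Every inner copy $P_{\pi_\ell}$ with $2\le\ell\le d'-1$ is covered completely. The first half of $P_{\pi_\ell}$ up to $a_4$ lies in $P^{(\ell-1)}\cup P^{(\ell)}$, because the junction point is at or beyond that region and the previous piece reaches it. The same holds symmetrically for the second half. The outermost copies are only guaranteed to contain $P'_{\pi_1}$ and $P'_{\pi_{d'}}$. This gives the lower containment $P'_{\pi_1}\circ\bigcirc_{\ell=2}^{d'-1}P_{\pi_\ell}\circ P'_{\pi_{d'}}\subseteq P[s,t]$.

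\textbf{Main obstacle.} The step I expect to be hardest is the junction bookkeeping. It has to be shown that the union of consecutive pieces really covers the $\langle a_1,\dots,a_4\rangle$ parts between copies with no gap, and that pieces cannot overlap into the same copy. Both follow from the pieces being contiguous and sharing endpoints, together with the uniqueness of $P'_x$ inside each piece. This must be handled carefully, in the spirit of the orange-path argument, but no new distance computation is needed.
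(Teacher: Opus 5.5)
Your proposal is correct and is essentially the paper's argument. The paper only remarks that the lemma follows directly from the construction and \Cref{lem:2DcontinuousLB_Q01}; your blockwise application of that lemma, followed by gluing the pieces at the junction stretches, is that argument written out. One cosmetic fix: under the paper's definition of $\circ$, consecutive blocks $Q_{\sigma_\ell}$ do not share the vertex $2a_1$ but are joined by a zero-length edge. The part of $P[s,t]$ matched to that edge should therefore be absorbed into one adjacent piece, which keeps its distance to $Q_{\sigma_\ell}$ below $1+c$, before you claim that consecutive pieces meet in a single point.
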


\begin{lemma}\label{lem:LBdirection2}
    If $\ddF(P,Q)< 2$ (resp. $\dF(P, Q)<1+c$) in the 1D (resp. 2D) construction, then there exist two indices $i$ and $j$ such that $\langle v_i, u_j\rangle=0$.
\end{lemma}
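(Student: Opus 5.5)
Fix a matching $M$ witnessing $\ddF(P,Q)<2$ (resp.\ $\dF(P,Q)<1+c$). The plan is to trace where $M$ sends the two subcurves $P_{null}$, show that between them it must cover a full copy of $(Q_1)^{d+1}\circ Q_0\circ\bigcirc_{\ell}Q_{u_{j\ell}}\circ Q_0$ for some $j$, and then use \Cref{lem:1DdiscreteLB} (resp.\ \Cref{lem:2DcontinuousLB}) to read off an orthogonal pair.

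\textbf{Step 1: locating $P_{null}$.} The point $P_{null}=(-5,0)$ is within distance less than $2$ (resp.\ $1+c$) only of points of $Q$ near $Q_{null}=(-4,0)$. Every other vertex of $Q$ has distance at least $3$ from it. In the continuous case one checks that edges of $Q$ stay in the region $x\ge -2$ except near $Q_{null}$. Hence each $P_{null}$ is matched into a small neighbourhood of one of the $m+1$ copies of $Q_{null}$.

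\textbf{Step 2: counting.} Next I show the two $P_{null}$ go to different copies of $Q_{null}$, so that the part of $P$ between them, which lies inside the unit ball, covers some block $Q_c\circ(Q_1)^{d+1}\circ\cdots\circ Q_c$. The prefix $P_s\circ(P^*)^{m-1}$ must cover $(Q^*)^{m-1}$ together with the blocks before the first chosen $Q_{null}$. Each $P^*$ contains a single pair of ``far'' vertices $(-3,0)$, and these can be near $Q_{null}$ or $(-2,0)$ but not near $0$. Meanwhile each $Q^*$ and each block contains the point $(0,0)$ or $Q_c$, which is at distance at least $2$ from $P_{null}$ and from $(-3,0)$. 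I would argue that the prefix can absorb at most $m-1$ ``descents to the left''. The $m-1$ copies of $Q^*$ already use them, since the vertices $(-2,0)$ of $Q^*$ must meet vertices at $-3$ or at $-1$ on the $P$-side. By symmetry, the suffix $(P^*)^{m-1}\circ P_s$ covers $(Q^*)^{m-1}$ and at most the trailing blocks. The total of $m$ blocks then forces at least one whole block to lie strictly between the two $P_{null}$.

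\textbf{Step 3: extracting the orthogonal pair.} Inside that block, the run $(Q_1)^{d+1}$ has vertices $\pm2$ (resp.\ $2a_i$). These must be matched to $P_0/P_1$ pieces, since $Q_c$-like points and $P_{null}$ are too far away. Applying \Cref{lem:1DdiscreteLB} (resp.\ \Cref{lem:2DcontinuousLB}) to $Q'=(Q_1)^{d+1}\circ Q_0\circ\bigcirc_\ell Q_{u_{j\ell}}\circ Q_0$ gives that $P[s,t]$ consists of exactly $2d+3$ consecutive gadgets, the first $d+1$ of which are $P_0$. The only positions in $P$ where $d+1$ consecutive $P_0$ gadgets are followed by $d+2$ further gadgets fitting the $0$-constraints align with $(P_0)^{d+1}\circ P_1\circ\bigcirc_\ell P_{v_{i\ell}}\circ P_1$ for some $i$. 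A shifted window would place a $P_1$ against some $Q_1$, which is forbidden, because each vector block contains exactly two $P_1$ separators and only $d$ vector gadgets, so any misalignment hits a separator $P_1$ under $(Q_1)^{d+1}$. Consequently $\pi_\ell=v_{i\ell}$ against $\sigma_\ell=u_{j\ell}$, and $v_{i\ell}=1$ forces $u_{j\ell}=0$, giving $\langle v_i,u_j\rangle=0$.

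The main obstacle is the counting in Step 2, and specifically making it rigorous in the continuous 2D case, where points on edges matter. I would handle it by observing that all of $P$ between the $P_{null}$, and the $P_0$ parts of $P^*$, stay in the unit ball, whereas $Q_{null}$ is at distance $4$. This reduces the problem to a monotone assignment of the ``far-left'' excursions, and the discrete and continuous arguments then coincide.
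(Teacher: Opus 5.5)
Your Steps 1 and 3 are sound. Step 3's run-length argument is a more careful version of what the paper only asserts: a run of $d+1$ consecutive $P_0$ gadgets in the middle part of $P$ can only be a separator-delimited $(P_0)^{d+1}$. In 2D you should phrase it with the partial end gadgets $P'_{\pi_1}, P'_{\pi_{d'}}$ given by \Cref{lem:2DcontinuousLB} rather than ``exactly $2d+3$ gadgets'', but that is cosmetic.

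\textbf{The gap is Step 2: its counting premise is false.} The vertices $(-2,0)$ of $Q^*$ need not use up any of the vertices $(-3,0)$ of the copies of $P^*$, because they can be matched to $-1$ (resp.\ $a_1$). In fact the single vertex $P_s$ covers all of $(Q^*)^{m-1}$ within distance $1$. So all $m-1$ copies of $P^*$ in the prefix remain free. Each $P^*=\langle(-3,0)\rangle\circ (P_0)^{2d+3}\circ\langle(-3,0)\rangle$ can absorb one $Q_{null}$ together with one block, so the prefix can absorb $m-1$ blocks. This is exactly what the distance-$1$ matching of \Cref{lem:LBdirection1} does for $j=m$; for general $j$ the prefix there absorbs $j-1$ blocks. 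The suffix behaves the same way. The total capacity is $2(m-1)\ge m$, so no pigeonhole contradiction arises, and this counting cannot force a block between the two $P_{null}$.

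\textbf{The missing idea is simpler, and it makes the $P^*$/$Q^*$ gadgets irrelevant for this direction.} After Step 1, let $q_1$ and $q_2$ be points of $Q$ matched to the first and second $P_{null}$.
\begin{itemize}
\item Suppose both lie near the same occurrence of $Q_{null}$. In 1D the part of $Q$ between them is then the single vertex $-4$. In 2D it lies on the two edges incident to that occurrence, inside $\{x<-5+(1+c)\}\subseteq\{x<-2\}$.
\item The whole middle part of $P$ must be matched into that portion of $Q$. But the middle part contains the vertex $1$, at distance $5>2$ from $-4$ (1D), resp.\ the vertex $a_5=(1,0)$, at distance more than $3\ge 1+c$ from every such point (2D). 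This is a contradiction.
\item Hence $q_1$ and $q_2$ lie near different occurrences of $Q_{null}$.
\item Consecutive occurrences of $Q_{null}$ are separated by exactly one block. So the block following the occurrence near $q_1$ lies entirely between $q_1$ and $q_2$, and is matched into the middle part of $P$.
\end{itemize}
That is precisely what your Step 3 needs, and it is the route the paper takes.
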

\begin{proof}
    Let~$M$ be a discrete (resp. continuous) matching that realizes the discrete (resp. continuous) Fréchet distance between $P$ and $Q$. Assume that $\dF(P,Q)< 2$ (resp. $\ddF(P, Q)<1+c$), then in the matching $M$, the two points of value $P_{null}$ of $P$ must be matched to two distinct points $q_1, q_2$ on~$Q$ both on an edge that contains the value $Q_{null}$, since they are the only points within distance at most $1+c\leq 3$ to~$P_{null}$. 
    Let $P_0'=P_0$ and $P_1'=P_1$ in the 1D setting. 
    Define $j$ such that $q_1$ is contained on an incident edge to the $j$-th $Q_{null}$ on $Q$. 
    Then, the first $(Q_1)^{d+1}$ after the $j$-th $Q_{null}$ on $Q$ must be matched to a subcurve $P'$ of $P$ such that $P_1'\circ (P_0)^{d+1}\circ P_1'$ is a subcurve of $P'$ by \Cref{lem:1DdiscreteLB} and \Cref{lem:2DcontinuousLB}. The only time we have $(P_0)^{d+1}$ in $P$ is before $P_1\circ\bigcirc_{\ell=1}^{d}P_{v_{i\ell}}$ for an $i$. Hence by \Cref{lem:1DdiscreteLB} and \Cref{lem:2DcontinuousLB}, there exists an index~$i$ such that in the matching~$M$ the subcurve
    $P'_{v_{i,1}}\circ \bigcirc_{\ell=2}^{d-1}P_{v_{i,\ell}}\circ P'_{v_{i,d}}$ is matched to a subcurve of $\bigcirc_{\ell=1}^{d}Q_{u_{j,\ell}}$ and $v_{i, \ell}=0$ if ${u_{j,\ell}}=1$. 
    Hence, it holds that $\langle v_i, u_j\rangle=0$.
\end{proof}
By \Cref{lem:LBdirection1} and \Cref{lem:LBdirection2} together with \Cref{obs:discreteContinuous}, the next two theorems follow.
\begin{theorem}\label{thm:1D}
    For any $\delta,\eps>0$, there does not exist a $(2-\eps)$-approximation algorithm for the discrete \F between 1-dimensional curves of complexity $n$ and $m$ using $\mathcal{O}((nm)^{1-\delta})$ time unless OVH fails. 
\end{theorem}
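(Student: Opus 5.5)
The argument is a standard fine-grained reduction from OV, using \Cref{lem:LBdirection1} and \Cref{lem:LBdirection2} as the gap statement. Assume, for contradiction, that for some $\delta,\eps>0$ there is an algorithm $\mathcal{A}$ that $(2-\eps)$-approximates the discrete \F between 1-dimensional curves of complexities $N$ and $M$ (with $M\le N$) in time $\Oh((NM)^{1-\delta})$.

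Fix this $\delta$. OVH supplies a constant $c>0$ such that OV instances $U,V\subset\{0,1\}^d$ with $|U|=m\le n=|V|$ and $d=c\log n$ cannot be solved in time $\Oh((nm)^{1-\delta'})$ for any $\delta'>0$ that we choose below. Given such an instance, we build the 1D curves $P$ and $Q$ from the construction above. Each gadget $P_0,P_1,Q_0,Q_1,P^*,Q^*,\dots$ has $\Oh(d)$ or constant size. Counting terms shows that $P$ has complexity $N=\Oh(nd+md)=\Oh(nd)$ and $Q$ has complexity $M=\Oh(md)$. Both curves can be written down in linear time.

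By \Cref{lem:LBdirection1}, if an orthogonal pair exists then $\ddF(P,Q)\le 1$. By \Cref{lem:LBdirection2}, if no orthogonal pair exists then $\ddF(P,Q)\ge 2$. We run $\mathcal{A}$ and obtain a value $D$ with $\ddF(P,Q)\le D\le (2-\eps)\ddF(P,Q)$. We answer ``yes'' exactly when $D<2$. This decides the instance correctly: in the yes case $D\le 2-\eps<2$, and in the no case $D\ge \ddF(P,Q)\ge 2$.

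The running time is $\Oh((NM)^{1-\delta})=\Oh((nm d^2)^{1-\delta})=\Oh((nm)^{1-\delta}\log^2 n)$. This lies in $\Oh((nm)^{1-\delta/2})$, because $\log^2 n=o(n^{\delta/2})\subseteq o((nm)^{\delta/2})$. So the reduction solves OV with $\delta'=\delta/2$. For the contradiction to go through, $c$ must be the constant that OVH assigns to $\delta'=\delta/2$, not to $\delta$. I would therefore invoke OVH with $\delta/2$ from the start; the argument is otherwise unchanged.

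The main obstacle is the bookkeeping of the imbalance. The lower bound should hold for the regime $m=n^{\alpha}$, so the complexity of $Q$ must depend only on $m$ and $d$, and the blow-up from $n,m$ to $N,M$ must be polylogarithmic. The construction guarantees this, since the $(P^*)^{m-1}$ parts add only $\Oh(md)\le\Oh(nd)$ vertices to $P$. If one wants the statement for a fixed ratio $M=N^{\alpha}$, the OV hardness for $m=n^{\alpha}$ transfers because $M=\Theta(m\log n)$ and $N=\Theta(n\log n)$; any resulting mismatch in the exponent can be absorbed into $\delta$.
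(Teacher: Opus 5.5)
Your proof is correct and takes the same route as the paper, which simply states that the theorem follows from \Cref{lem:LBdirection1} and \Cref{lem:LBdirection2}. The gap between $\ddF(P,Q)\le 1$ and $\ddF(P,Q)\ge 2$ is separated by a $(2-\eps)$-approximation, and the curves have complexity $\Oh(nd)$ and $\Oh(md)$ with $d=c\log n$; your explicit absorption of the polylogarithmic blow-up by invoking OVH with $\delta/2$ is bookkeeping the paper leaves implicit.
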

\begin{theorem}\label{thm:2D}
    For any $\delta,\eps>0$, there does not exist a $(k-\eps)$-approximation algorithm for the continuous or discrete \F between 2-dimensional curves of complexity $n$ and $m$ using $\mathcal{O}((nm)^{1-\delta})$ time unless OVH fails, where $k=3$ (resp. $k=1+\sqrt{2}\approx2.41$) if the curves lie in the $L_\infty$-space (resp. in the Euclidean space).
\end{theorem}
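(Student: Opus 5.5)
The plan is a standard fine-grained reduction from OV, using the 2D constructions of $P$ and $Q$ (with $c=\sqrt{2}$ in $L_2$ and $c=2$ in $L_\infty$). Assume for contradiction that for some $\delta,\eps>0$ there is a $(1+c-\eps)$-approximation algorithm $\mathcal{A}$ for the continuous or discrete \F that runs in $\Oh((nm)^{1-\delta})$ time. Apply OVH (\Cref{def:OVH}) with $\delta/2$ to obtain $c'$. Take an OV instance $U,V\subset\{0,1\}^d$ with $|U|=m\le |V|=n$ and $d=c'\log n$. Build $P$ and $Q$ as above, in the chosen metric. This takes $\Oh((n+m)d+m^2d)$ time in the worst case because of the $(P^*)^{m-1}$ and $(Q^*)^{m-1}$ blocks. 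Since each $P^*$ has $\Oh(d)$ vertices, $P$ has complexity $N=\Oh((n+m^2)d)$, and $Q$ has complexity $M=\Oh(md)$.

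Correctness follows from the two directions already proved. If some pair is orthogonal, then $\ddF(P,Q)\le 1$ by \Cref{lem:LBdirection1}, and hence $\dF(P,Q)\le1$ by \Cref{obs:discreteContinuous}. If no pair is orthogonal, then \Cref{lem:LBdirection2} gives $\dF(P,Q)\ge 1+c$, and again by \Cref{obs:discreteContinuous} also $\ddF(P,Q)\ge 1+c$. So the answer $\mathcal{A}$ returns is at most $1+c-\eps$ in the first case and at least $1+c$ in the second, and comparing it with $1+c-\eps$ (or any threshold in between) decides the OV instance. The reduction must work for both the discrete and the continuous algorithm, which is exactly why \Cref{obs:discreteContinuous} is used in both directions.

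The main obstacle is the running-time accounting, because of the $m^2$ term in the complexity of $P$. In the regime $m^2\le n$ we get $N=\Oh(nd)$, and $\mathcal{A}$ runs in $\Oh((nmd^2)^{1-\delta})$, which is $\Oh((nm)^{1-\delta/2})$ for $d=\Oh(\log n)$, contradicting OVH. For $m^2>n$ I would first look at whether the $(P^*)^{m-1}$ blocks can be shortened, for instance to $m$ single points, while keeping \Cref{lem:LBdirection2} valid. If they cannot, I would split $U$ into $\lceil m/m'\rceil$ groups of size $m'=\lfloor\sqrt n\rfloor$ and run the reduction once per group against all of $V$. Each instance then has $N=\Oh(nd)$ and $M=\Oh(m'd)$, and the total time is $\Oh\bigl((m/m')(nm'd^2)^{1-\delta}\bigr)\le \Oh\bigl((nm)^{1-\delta}\,d^{2}\,(m/m')^{\delta}\bigr)$. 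Since $m/m'\le \sqrt{n}$, this is $\Oh((nm)^{1-\delta}n^{\delta/2}d^2)$, which is polynomially smaller than $(nm)^{1-\delta/4}$ only when $m$ is polynomially related to $n$, for example $m\ge n^{\gamma}$. A general $m$ would then need a short case analysis: the balanced regime is handled by grouping and the imbalanced regime directly. That is the standard setting of such bounds, with $m=n^{\alpha}$.

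Finally, the 1D analogue (\Cref{thm:1D}) follows verbatim from \Cref{lem:LBdirection1} and \Cref{lem:LBdirection2}, with threshold $2$ in place of $1+c$.
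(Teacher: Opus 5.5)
Your argument is the paper's: \Cref{lem:LBdirection1} and \Cref{lem:LBdirection2}, transferred between the discrete and continuous distance via \Cref{obs:discreteContinuous}, separate the YES case (distance $\le 1$) from the NO case (distance $\ge 1+c$), so any $(1+c-\eps)$-approximation decides OV.

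One correction: $(P^*)^{m-1}$ is $m-1$ copies of a curve with $\Theta(d)$ vertices, so it has only $\Theta(md)$ vertices. Hence $P$ has complexity $\Oh((n+m)d)=\Oh(nd)$, as the paper states, not $\Oh((n+m^2)d)$. The direct reduction therefore runs in $\Oh((nm\,d^2)^{1-\delta})=\Oh((nm)^{1-\delta/2})$ for every $m\le n$, so the ``main obstacle'' does not exist, and your grouping detour, though it does close, is unnecessary.
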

Our construction gives also a bound on \F oracles.
\begin{restatable}{theorem}{thmFreOracles}
    Assume OVH holds true. For any $\delta>0$, there exists a $c>0$ such that for any $n$ and $m>c\log n$ there is no data structure storing a curve of complexity~$n$ with $\textup{poly}(n)$ preprocessing time, that when given a query curve of at most complexity~$m$ can answer continuous or discrete \F queries in $\mathcal{O}((nm)^{1-\delta})$ query time within the same approximation factor as in \Cref{thm:1D} and \Cref{thm:2D}.
\end{restatable}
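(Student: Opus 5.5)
The plan is a reduction from OV that exploits the asymmetry of the construction: $P$ depends only on $V$ and on $|U|=m$, while $Q$ depends only on $U$. Suppose, for contradiction, that for some $\delta>0$ there is a data structure with $\mathrm{poly}(n)$ preprocessing time and $\Oh((nm)^{1-\delta})$ query time that approximates the Fréchet distance within the factor of \Cref{thm:1D} (resp.\ \Cref{thm:2D}).

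Since preprocessing is polynomial in $n$ but only happens once, we will split the larger set into many small OV instances.
\begin{itemize}
\item Take an OV instance $U,V$ with $|U|=|V|=N$ and $d=c\log N$, where $c$ is chosen via \Cref{def:OVH} for a suitable $\delta'$.
\item Choose an exponent $\beta\in(0,1)$ small. Partition $U$ into $N/N^{\beta}$ groups $U_1,\dots,U_{N^{1-\beta}}$, each of size $m_0=N^{\beta}$.
\item Build $P$ once from $V$ and the group size $m_0$; its complexity is $n=\Oh(Nd+m_0d)=\Oh(N\log N)$.
\item Preprocess $P$ in $\mathrm{poly}(n)$ time. We need this to be $\Oh(N^{2-\delta'})$, which is the main obstacle discussed below.
\item For each group $U_k$, build $Q_k$ of complexity $m=\Oh(m_0 d)$. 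This satisfies $m>c\log n$ for large $N$, matching the hypothesis of the theorem.
\item Query the data structure with each $Q_k$. By \Cref{lem:LBdirection1} and \Cref{lem:LBdirection2} (with \Cref{obs:discreteContinuous}), the approximate answer is $<2$ (resp.\ $<1+c$) if and only if $U_k\times V$ contains an orthogonal pair. This lets us decide OV.
\end{itemize}

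The total query time is
\[
N^{1-\beta}\cdot\Oh\big((N\log N\cdot N^{\beta}\log N)^{1-\delta}\big)=\Oh\big(N^{2-\delta-\beta\delta}\,\mathrm{polylog}\,N\big),
\]
which is subquadratic.

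The delicate step is the preprocessing cost $\mathrm{poly}(n)=\Oh(n^{k})$ for some fixed $k$, since this can exceed $N^2$. The OVH formulation in \Cref{def:OVH} already allows unbalanced sizes, so we will instead take $|V|=n'$ and $|U|=m'$ with $m'=n'^{k}$, making $V$ small and $U$ huge. Now preprocessing costs $\Oh(n'^{k})=\Oh(m')$, which is negligible. We then split $U$ into $m'/m_0$ groups of size $m_0$, and the total query time is
\[
(m'/m_0)\cdot\big(n'm_0\,\mathrm{polylog}\big)^{1-\delta}\le m'n'^{1-\delta}\,\mathrm{polylog},
\]
which is $\Oh((n'm')^{1-\delta'})$ for a suitable $\delta'>0$.

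Two details need care. First, \Cref{def:OVH} assumes $|U|\le|V|$, but OV is symmetric, so we apply it with the roles of the two sets swapped. Second, $d=c\log(\max)$ must still satisfy $m>c\log n$; we take $m_0=\Theta(\log m')$ so that $m=\Theta(\log^2)$ exceeds $c\log n$, adjusting constants as needed. The exponent bookkeeping that relates $c$, $\delta$, $\delta'$ and $k$ is the step I would verify most carefully.
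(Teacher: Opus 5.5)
Your reduction is the paper's: one data curve $P$ built from the small set $V$ and a fixed group size, $U$ made polynomially larger than $V$ so that the $\mathrm{poly}(n)$ preprocessing is amortized, and one query $Q_k$ per group of $U$. Your exponent bookkeeping is sound: preprocessing costs $\Oh((n'\log n')^k)$ and queries cost $\Oh(m'n'^{1-\delta}\,\mathrm{polylog})$, both $\Oh((n'm')^{1-\delta'})$ for $\delta'<\delta/(k+1)$. It is also more explicit than the paper's, whose stated bound $M\geq n^{\alpha-1.5}$ would not by itself make $n^\alpha$ small relative to $nM$.

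The gap is your final choice $m_0=\Theta(\log m')$, made ``so that $m=\Theta(\log^2)$ exceeds $c\log n$''. Here $m$ is not yours to choose. The statement is universally quantified over $m>c\log n$, so a hypothetical data structure comes with its own query complexity (say $m=\sqrt n$), and you have to refute that one. Your argument only excludes structures whose query complexity is $\Theta(\log^2 n)$, and this does not transfer upward. A structure for $m=\sqrt n$ answers short queries only within $\Oh((n\sqrt n)^{1-\delta})$ time, which is useless when you query with $\log^2$-size groups.

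The repair is what the paper does. Given $m$, use groups of size $m_0=\lfloor m/(Cd)\rfloor$, where $Cd$ bounds the number of vertices one vector contributes to $Q$ (the paper takes $m/(60c\log n)$). Then every $Q_k$ has complexity at most $m$. Pad the last group by repeating one of its vectors, so that the single $P$ built for $m_0$ serves every query. Your estimate $(m'/m_0)(n'm_0\,\mathrm{polylog})^{1-\delta}\le m'n'^{1-\delta}\,\mathrm{polylog}$ holds for every $m_0\ge 1$, so nothing else changes.

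One caveat remains. Since the OV dimension is $c\log m'$ with $m'=n'^k$, the condition $m_0\ge 1$ needs $m\ge Cd=Cck\log n'$. This is automatic when $m/\log n\to\infty$. For $m=\Theta(\log n)$, however, the threshold constant in the theorem must depend on the preprocessing exponent $k$, not only on $\delta$. Your $\log^2$ choice hid exactly this. The paper's proof hides it too, by using dimension $c\log|V|$ although $|U|>|V|$, which \Cref{def:OVH} does not provide.
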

\begin{proof}
    Suppose for the sake of contradiction that there exists a data structure with $\Oh(n^\alpha)$ preprocessing time and $\mathcal{O}((nm)^{1-\delta})$ query time. We assume without loss of generality that $\alpha\geq 2$. 
    Let $U$, $V$ be a OV instance with $|U|=M$ and $|V|=n$ of dimension $c \log n$ with $M\geq n^{\alpha-1.5}$ and $M\geq m$, where $c$ is as in \Cref{def:OVH}. Then, we preprocess the curve~$P$ defined by~$V$ as in the lower bound construction above. The set $U$ gets subdivided into $k=\lceil (60c\log n) M/m\rceil$
    disjoint sets $U_1, \dots, U_k$ such that $U=\bigcup_{i=1}^{k}U_i$, $|U_i|=m/(60c\log n)$ for $i<k$ and $|U_k|\leq m/(60c\log n)$. For every set~$U_i$, we construct a query curve~$Q_i$ with the lower bound construction of~$Q$ above. Then, $Q_i$ has complexity at most $m$. For every query curve~$Q_i$, we compute its distance to $P$ in $\mathcal{O}((nm)^{1-\delta})$ time. By the proof of \Cref{thm:1D} and \Cref{thm:2D}, it holds that there exists a curve $Q_i$ within distance $1$ to~$P$ if and only if OV has a solution. Hence, the OV instance is solved in $\Oh(n^\alpha+k(nm)^{1-\delta})$ time, which contradicts OVH.
\end{proof}

\section{2-Approximation for the discrete \F in 1D}
Given a value $m$ and a 1-dimensional curve $P$ of complexity $n\geq m$. We want to preprocess~$P$ in near-linear time and store information about $P$ using $\mathcal{O}(m)$ space such that for any 1-dimensional query curve $Q$ of complexity at most $m$ we can approximate the discrete \F between $P$ and $Q$ in $\mathcal{O}(m^2\log m)$ time.

\subparagraph*{Preprocessing.}
Define $\delta_m=\min\{\ddF(P, Q)\mid Q \text{ has complexity }m\}$. Then, we want to compute a curve~$P'$ such that $\ddF(P, P')\leq \delta_m/2$. The complexity of $P'$ might be~$n$, but it should be possible to store a compact description of $P'$ in $\mathcal{O}(m)$ space.
To compute~$P'$, we first compute the optimal $m$-simplification $P^*$ of $P$ for the discrete Fréchet distance. The curve~$P^*$ is computed via searching over the possible values~$\delta$ for $\delta_m$ and then computing the optimal~$\delta$-simplification. We compute the optimal $\delta$-simplification of $P$ using \Cref{Alg:deltaSimpl}. It begins by finding the longest prefix $P[1, i_1]$ that is contained in a $2\delta$-interval, whose center becomes the first vertex of the $\delta$-simplification. Then, it continues by finding the largest index $i_2$ such that $P[i_1+1, i_2]$ is contained in a $2\delta$-interval and again the center of this interval is added to the $\delta$-simplification. This process is repeated until the final vertex of~$P$ is reached.

\begin{algorithm}
    \caption{Optimal $\delta$-simplification of $P$\label{Alg:deltaSimpl}}
    $\ell\gets 1$, $a\gets P(1)$, $b\gets P(1)$\;
    \For{$i=1$ to $n$}{
        \If{$P(i)\leq a+2\delta$ and $b-2\delta\leq P(i)$\label{li:simplInterval1}}{
            $a\gets\min\{a, P(i)\}$, $b\gets\max\{b, P(i)\}$\label{li:simplInterval2}\;
        }
        \Else{
            $p_\ell\gets\frac{a+b}{2}$, 
            $a\gets P(i)$, $b\gets P(i)$, $i_{\ell}\gets i-1$, $\ell\gets \ell+1$\;
        }
    }
    Return $P_\delta=\langle p_1, \dots, p_\ell\rangle$, $i_0=0, i_1, \dots, i_\ell=n$\;
\end{algorithm}

\begin{restatable}{lemma}{lemdeltaSimpl}
    \label{lem:deltaSimpl}
    \Cref{Alg:deltaSimpl} computes an optimal $\delta$-simplification $P_\delta$ of $P$ using~$\mathcal{O}(n)$~time.
\end{restatable}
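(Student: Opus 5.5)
Three things need to be shown: the running time, that the output $P_\delta$ is a $\delta$-simplification, and that no curve with fewer vertices has discrete Fréchet distance at most $\delta$ to $P$.

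\emph{Running time.} Each iteration of the loop does a constant number of comparisons and assignments, so \Cref{Alg:deltaSimpl} runs in $\Oh(n)$ time. (For the statement to make sense we also read the algorithm as closing the last block after the loop, setting $p_\ell\gets\frac{a+b}{2}$ and $i_\ell=n$.)

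\emph{Invariant.} The test in line~\ref{li:simplInterval1} accepts $P(i)$ exactly when $\max\{b,P(i)\}-\min\{a,P(i)\}\le 2\delta$. Hence the algorithm partitions $P$ into consecutive blocks $P[i_{k-1}+1,i_k]$ with the following two properties:
\begin{itemize}
\item every block has range (maximum minus minimum) at most $2\delta$;
\item each block is maximal, i.e.\ the block $P[i_{k-1}+1,i_k+1]$ has range more than $2\delta$.
\end{itemize}

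\emph{Feasibility.} By the first property, every vertex of block $k$ is within distance $\delta$ of its midpoint $p_k$. We match the whole block $P[i_{k-1}+1,i_k]$ to the single vertex $p_k$ and then step diagonally to the next block. This is a valid discrete matching, so $\ddF(P,P_\delta)\le\delta$.

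\emph{Optimality.} Let $Q=\langle q_1,\dots,q_r\rangle$ satisfy $\ddF(P,Q)\le\delta$, and fix a realizing discrete matching. For each $j$, let $f(j)$ be the largest index of $P$ matched to some $q_{j'}$ with $j'\le j$. We prove by induction on $k$ that $f(k)\ge i_k$, i.e.\ the first $k$ vertices of $Q$ can cover at most the first $k$ greedy blocks.
\begin{itemize}
\item For the induction step, the indices $f(k-1)+1,\dots,f(k)$ of $P$ are matched only to $q_k$. (Indices between consecutive positions of a discrete matching cannot be skipped, so these indices are matched, and only to $q_k$ or to earlier vertices already accounted for.)
\item Consequently all of $P[f(k-1)+1,f(k)]$ lies within $\delta$ of $q_k$, so this segment has range at most $2\delta$.
\item If $f(k)<i_k$ were false the claim holds; otherwise we argue as follows. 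By the induction hypothesis $f(k-1)\ge i_{k-1}$, so $f(k-1)+1\ge i_{k-1}+1$.
\item By the maximality of the greedy blocks, any index interval of range at most $2\delta$ that starts at or after $i_{k-1}+1$ ends no later than $i_k$. Since range can only increase when an interval is extended, this comparison is monotone, and it gives the inductive bound.
\end{itemize}
The main subtlety is formulating this exchange argument cleanly. The claim to prove is that the greedy partition pointwise dominates any partition of $P$ into intervals of range at most $2\delta$: the $k$-th greedy endpoint is at least the $k$-th endpoint of any other such partition. The induction above shows exactly this, because the matching to $Q$ induces such a partition, with possibly empty or overlapping shared endpoints handled by taking maxima.

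\emph{Conclusion.} Since $f(r)=n$ and the endpoints $f(k)$ are dominated by the greedy endpoints $i_k$, the greedy procedure needs at most $r$ blocks. Therefore $\ell\le r$, and $P_\delta$ is an optimal $\delta$-simplification.
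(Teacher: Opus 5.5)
Your plan is the paper's: greedy blocks of range at most $2\delta$, feasibility by collapsing each block onto its midpoint, and optimality by a greedy-stays-ahead argument. The paper phrases the last part as taking the first tuple $(x,y)$ of a matching with $x>i_y$. However, the induction that carries your optimality part has every inequality reversed, and as written it does not go through.

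\textbf{What fails.} You state and assume $f(k)\ge i_k$, deduce $f(k-1)+1\ge i_{k-1}+1$, and then invoke the claim that any index interval of range at most $2\delta$ starting \emph{at or after} $i_{k-1}+1$ ends no later than $i_k$. That claim is false whenever $k<\ell$: the greedy block $[i_k+1,i_{k+1}]$ itself starts after $i_{k-1}+1$, has range at most $2\delta$, and ends at $i_{k+1}>i_k$. The claim $f(k)\ge i_k$ is also false in general. Take $P=\langle 0,0,10,10\rangle$, $\delta=1$, and $Q=P$ with the diagonal matching: then $f(1)=1<2=i_1$. The bullet ``if $f(k)<i_k$ were false the claim holds'' does not parse as an argument. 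Your glosses (``cover at most the first $k$ greedy blocks'', ``the $k$-th greedy endpoint is at least\ldots'') show that the intended statement is $f(k)\le i_k$, but nothing you wrote proves it.

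\textbf{How to repair it.} Prove $f(k)\le i_k$ for $k\le\min\{r,\ell\}$, with $f(0)=i_0=0$. Suppose $f(k-1)\le i_{k-1}$ but $f(k)\ge i_k+1$. The tuples of the matching with second coordinate $k$ cover a contiguous index range that starts at most at $f(k-1)+1$ and ends at $f(k)$. Hence every index in $[f(k-1)+1,f(k)]\supseteq[i_{k-1}+1,i_k+1]$ is matched to $q_k$. These indices need not be matched \emph{only} to $q_k$, since $f(k)$ may also be matched to $q_{k+1}$, but that is irrelevant. So $P[i_{k-1}+1,i_k+1]$ lies within $\delta$ of $q_k$ and has range at most $2\delta$, contradicting the maximality of block $k$. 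The maximality fact you actually need concerns intervals starting at or \emph{before} $i_{k-1}+1$; it holds because extending an interval to the left cannot decrease its range. Finally, if $r<\ell$ this gives $n=f(r)\le i_r<i_\ell=n$, a contradiction. With this correction your proof is exactly the paper's first-violating-tuple argument written as an induction.
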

\begin{proof}
    First, observe that $\ddF(P, P_\delta)\leq \delta$, since $((1,1), (2,1) \dots, (i_1, 1), (i_1+1, 2), ({i_1+2}, 2),$ $ \dots, (i_2, 2), ({i_2+1}, 3), \dots, (i_\ell, \ell))$ is a discrete matching realizing this distance. Assume that there exists a curve $P'$ of complexity $k<\ell$ such that $\ddF(P, P')\leq \delta$. Then, there exists a discrete matching $M$ between $P$ and $P'$ realizing the discrete \F and a tuple $(x, y)$ in $M$ with $x>i_y$. Let $(x, y)$ be the first such tuple. Then, by line~\ref{li:simplInterval1}--\ref{li:simplInterval2} of \Cref{Alg:deltaSimpl} there exist $P(x_1)\leq P(x_2)$ such that $(x_1, y)$ and $(x_2, y)$ in $M$ and $P(x_1)$, $P(x_2)$, and $P(x)$ are not contained in a $2\delta$ interval. Hence, $\ddF(P, P')>\delta$.
\end{proof}

If the complexity of the $\delta$-simplification is greater than $m$, it holds that $\delta_m>\delta$. Otherwise, $\delta_m\leq \delta$. In contrast to higher dimensions, the optimal $m$-simplification for any~$m$ of a 1-dimensional curve $P$ has distance $\delta_m\in\{|P(i)-P(j)|/2\mid i,j\in \{1, \dots, n\}\}$ to~$P$, where~the distance is only defined by two vertices of $P$. By Frederickson and Johnson~\cite{FJ84}, we can search over all possible values for $\delta_m$ using $\mathcal{O}(\log n)$ calls of \Cref{Alg:deltaSimpl} and get the following~lemma.

\begin{lemma}
    We can compute an optimal $m$-simplification $P^*$ of $P$ together with a discrete matching that realizes the discrete \F between $P^*$ and $P$ in $\mathcal{O}(n\log n)$~time.
\end{lemma}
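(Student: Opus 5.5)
The plan is a parametric search over a finite candidate set, with \Cref{lem:deltaSimpl} serving as the decision procedure. First I would justify the structural claim stated just before the lemma: $\delta_m\in D:=\{|P(i)-P(j)|/2 \mid i,j\in\{1,\dots,n\}\}$. Take an optimal $m$-simplification and a matching realizing its distance to $P$. Each vertex $q$ of the simplification is matched to a set of vertices of $P$. These sets together cover all vertices, and consecutive groups may share vertices. The optimal choice of $q$ is the midpoint of the range of its group. It follows that $\delta_m$ equals half the largest range among the groups, and that range is a difference of two vertex values of $P$. Moreover, \Cref{Alg:deltaSimpl} is monotone in $\delta$: the output complexity $\ell(\delta)$ is non-increasing, since a larger $\delta$ only extends the greedy intervals. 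Therefore $\delta_m$ is the smallest $\delta\in D$ with $\ell(\delta)\le m$.

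Next I would organize the search. Sort the values $P(1),\dots,P(n)$ in $\Oh(n\log n)$ time, obtaining $x_1\le\dots\le x_n$. The set $D$ is then, up to the factor $1/2$, the set of entries of the $n\times n$ matrix $A[i,j]=x_j-x_i$. Its rows and columns are sorted monotonically, so it is a sorted matrix. The Frederickson–Johnson selection technique~\cite{FJ84} for sorted matrices finds the smallest entry satisfying a monotone predicate. It uses $\Oh(\log n)$ predicate evaluations plus $\Oh(n)$ additional work per round, that is, $\Oh(n\log n)$ in total outside the evaluations. Each evaluation runs \Cref{Alg:deltaSimpl} in $\Oh(n)$ time by \Cref{lem:deltaSimpl} and compares $\ell(\delta)$ with $m$. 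The total time is therefore $\Oh(n\log n)$.

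Finally, I would run \Cref{Alg:deltaSimpl} once more with $\delta=\delta_m$. This returns $P^*=P_{\delta_m}$, which has complexity at most $m$, optimal distance $\delta_m$, and the breakpoints $i_0,\dots,i_\ell$. From the proof of \Cref{lem:deltaSimpl}, the matching that pairs the indices in $(i_{y-1},i_y]$ with $y$ realizes $\ddF(P,P^*)\le\delta_m$, and it can be written down in $\Oh(n)$ time. If the complexity is strictly less than $m$, we may pad by repeating a vertex to obtain exactly $m$ vertices; the distance is unchanged.

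The main obstacle is applying~\cite{FJ84} correctly. The difference matrix is sorted only after sorting $P$'s values, and it must not be constructed explicitly, because it has $n^2$ entries. The FJ procedure needs only implicit $\Oh(1)$ access to the entries, which is available here. The other subtle point is the monotonicity of $\ell(\delta)$, on which the search depends. If this is not obvious from the greedy algorithm, I would instead use the optimality of $\ell(\delta)$ proved in \Cref{lem:deltaSimpl}: any $\delta$-simplification is also a $\delta'$-simplification for every $\delta'\ge\delta$, so the optimum cannot increase.
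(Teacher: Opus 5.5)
Your proposal is correct and follows the paper's argument. Like the paper, you restrict $\delta_m$ to the half vertex-differences of $P$ and search that set with the Frederickson--Johnson sorted-matrix technique, using $\Oh(\log n)$ calls of \Cref{Alg:deltaSimpl}, then read off $P^*$ and its matching from a final run at $\delta_m$; you additionally justify the candidate-set claim and the monotonicity of the predicate, both of which the paper only asserts.
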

\begin{figure}
    \centering
    \includegraphics{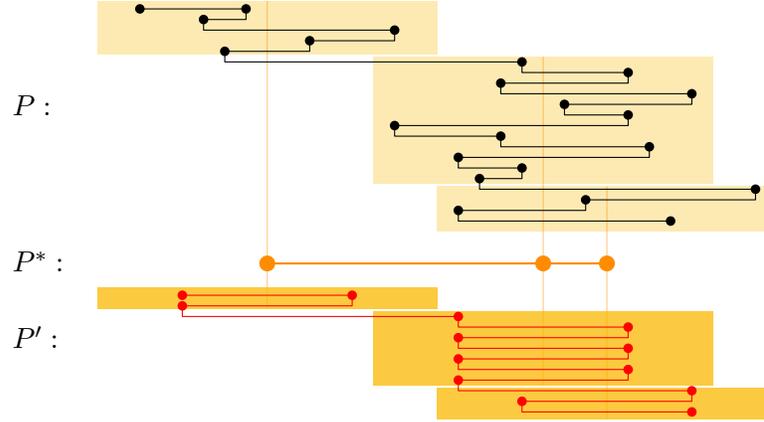}
    \caption{The black curve is $P$ and the middle curve is an optimal $3$-simplification for $P$. The red curve below is the compressed curve $P'$. Here, $k_1=3$, $k_2=7$, $k_3=3$ and $z_1=-1$, $z_2=-1$, $z_3=1$.}
    \label{fig:Preprocessing}
\end{figure}

Let $P^*=\langle p_1, \dots, p_\ell\rangle$ be the optimal $m$-simplification of $P$. Consider the discrete matching~$M$ between $P$ and $P^*$ computed by \Cref{Alg:deltaSimpl}, and let $P(i_j)$ be the last vertex of $P$ that is matched to $p_j$; i.e. \[M=((1,1), (2,1) \dots, (i_1, 1), (i_1+1, 2), ({i_1+2}, 2),\dots, (i_2, 2), ({i_2+1}, 3), \dots, (i_\ell, \ell)).\]
We construct a curve~$P'$ that has discrete \F at most $\delta_m/2$ to $P$ and can be stored in a compressed way using $\mathcal{O}(m)$ space (see \Cref{fig:Preprocessing}). We call~$P'$ a \emph{compressed simplification}.
For every $(x, y)\in M$, define $r_x=p_y-\delta_m/2$ if $P(x)<p_y$ and $r_x=p_y+\delta_m/2$ otherwise. Then, it holds that $|r_x-P(x)|\leq \delta_m/2$. The compressed simplification~$P'$ is $\langle r_1, \dots, r_n\rangle$ after deleting the vertices with identical predecessor, so that~$P'$ contains $r_i$ only if $r_i\neq r_{i-1}$. 
For every $j$, we count the total number $k_j$ that $P'$ contains $p_j\pm\delta_m/2$ and define $z_j\in \{-1, 1\}$ such that the first time $P'$ contains $p_j+z_j \delta_m/2$ is before~$P'$ contains $p_j-z_j \delta_m/2$. Therefore, the values $p_1, \dots, p_\ell$, $k_1, \dots, k_\ell$, $z_1, \dots, z_\ell$, and~$\delta_m$ fully describe $P'$. The next lemma follows directly by construction and $\ell\leq m$.

\begin{lemma}\label{lem:distP'P}
    Given $P$ and its optimal $m$-simplification $P^*$, we can compute in $\mathcal{O}(n)$ time a curve $P'$ such that $\ddF(P, P')\leq \delta_m/2$ and $P'$ can be stored using $\mathcal{O}(m)$ space.
\end{lemma}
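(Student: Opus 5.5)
The plan is to verify the two claims directly from the construction: the distance bound $\ddF(P,P')\leq \delta_m/2$, and the compact description in $\mathcal{O}(m)$ space built in $\mathcal{O}(n)$ time.

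\textbf{Distance bound.} Let $R=\langle r_1,\dots,r_n\rangle$ be the uncompressed curve. Since $M$ realizes $\ddF(P,P^*)=\delta_m$, each $(x,y)\in M$ satisfies $|P(x)-p_y|\leq \delta_m$. If $P(x)<p_y$, then $P(x)\in[p_y-\delta_m,p_y)$, so $|P(x)-r_x|=|P(x)-(p_y-\delta_m/2)|\leq \delta_m/2$. The case $P(x)\geq p_y$ is symmetric. Hence the identity matching $((1,1),(2,2),\dots,(n,n))$ between $P$ and $R$ has cost at most $\delta_m/2$. Next, $P'$ is obtained from $R$ by deleting consecutive duplicates. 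A maximal run $r_a=\dots=r_b$ collapses to one vertex $w$ of $P'$, and we match all of $P(a),\dots,P(b)$ to $w$. This is a valid discrete matching, since $P$ advances by one each step and $P'$ advances by one exactly when a run ends, and its cost is unchanged. So $\ddF(P,P')\leq\delta_m/2$.

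\textbf{Compact description.} Every $r_x$ with $(x,y)\in M$ takes one of the two values $p_y\pm\delta_m/2$. The indices matched to $p_y$ form the contiguous block $(i_{y-1},i_y]$. Within a block, after removing duplicates, $P'$ alternates between $p_y+\delta_m/2$ and $p_y-\delta_m/2$, assuming $\delta_m>0$. If $\delta_m=0$, then $P'=P^*$ is trivially described.

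One subtlety is that duplicate deletion across a block boundary could merge the last value of block $y$ with the first value of block $y+1$ if they coincide. To handle this, I would either define $k_j$ and $z_j$ per block before cross-block merging and then merge when decoding, or simply note that the decoder can reproduce the merge from $p_j,p_{j+1},\delta_m$. Either way, the block for $p_j$ is determined by its starting sign $z_j$ and its length $k_j$. So $(p_j,k_j,z_j)_{j\leq \ell}$ together with $\delta_m$ determines $P'$, and this uses $\mathcal{O}(\ell)\subseteq\mathcal{O}(m)$ space because $P^*$ is an $m$-simplification.

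\textbf{Running time.} A single scan over $x=1,\dots,n$ along $M$ suffices: compute $r_x$, compare it with the previous value, and update the counter $k_y$, setting $z_y$ at the first entry of each block. This is $\mathcal{O}(n)$ time. The only step needing care is the block-boundary bookkeeping in the compact description; the rest is immediate.
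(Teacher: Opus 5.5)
Your proposal is correct and takes the same approach as the paper, whose proof is only ``by construction and $\ell\le m$''. That construction is the rounding with $|r_x-P(x)|\le\delta_m/2$, the deletion of consecutive duplicates, and the encoding by $p_j,k_j,z_j,\delta_m$; you just spell out the matching and the alternation in more detail. The block-boundary merge you guard against never occurs for the greedy matching: maximality of block $j$ forces $|P(i_j+1)-p_j|>\delta_m$, hence $|r_{i_j+1}-p_j|>\delta_m/2=|r_{i_j}-p_j|$.
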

\subparagraph*{The Query Algorithm.}
In the following, we give a $2$-approximation query algorithm for the discrete \F between $P$ and any query curve $Q$ of complexity~$m$. We use the precomputed compressed simplification $P'$ and compute $\ddF(P', Q)$ if it is at least~$\frac{3}{2}\delta_m$. 

\begin{lemma}\label{lem:triangleInequPP'Q}
    If $\ddF(P', Q)\leq \frac{3}{2}\delta_m$, then $\ddF(P, Q)\in [\delta_m, 2\delta_m]$.
    Otherwise, define~$\delta$ such that $\frac{3}{2}\delta=\ddF(P', Q)$. Then, $\ddF(P, Q)\in [\delta, 2\delta)$.
\end{lemma}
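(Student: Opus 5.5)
The proof combines the triangle inequality for $\ddF$ with \Cref{lem:distP'P} and the definition of $\delta_m$.

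\textbf{Ingredients.} We have $\ddF(P,P')\leq \delta_m/2$ by \Cref{lem:distP'P}. Since $Q$ has complexity at most $m$, the definition of $\delta_m$ gives $\ddF(P,Q)\geq \delta_m$; duplicating vertices of $Q$ does not change the discrete \F, so complexity exactly $m$ is not needed. The triangle inequality for the discrete \F then gives
\[
\ddF(P',Q)-\delta_m/2\;\leq\; \ddF(P,Q)\;\leq\; \ddF(P',Q)+\delta_m/2 .
\]

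\textbf{Case $\ddF(P',Q)\leq \frac32\delta_m$.} The lower bound $\ddF(P,Q)\geq\delta_m$ comes from the definition of $\delta_m$. For the upper bound, $\ddF(P,Q)\leq \frac32\delta_m+\frac12\delta_m=2\delta_m$. Hence $\ddF(P,Q)\in[\delta_m,2\delta_m]$.

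\textbf{Case $\ddF(P',Q)=\frac32\delta>\frac32\delta_m$.} Here $\delta>\delta_m$, i.e. $\delta_m<\delta$.

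For the upper bound, $\ddF(P,Q)\leq \frac32\delta+\frac12\delta_m<\frac32\delta+\frac12\delta=2\delta$, which is strict because $\delta_m<\delta$.

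For the lower bound, $\ddF(P,Q)\geq \frac32\delta-\frac12\delta_m>\frac32\delta-\frac12\delta=\delta$. Hence $\ddF(P,Q)\in[\delta,2\delta)$.

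\textbf{Main point to justify.} The one step that needs an argument rather than arithmetic is the triangle inequality for $\ddF$. I would prove it by composing the two discrete matchings (between $P$ and $P'$, and between $P'$ and $Q$) into a coupling of $P$ and $Q$. Given index pairs $(x,y)$ in the first matching and $(y,z)$ in the second that share the index $y$ of $P'$, pair $x$ with $z$. Then $|P(x)-Q(z)|\leq |P(x)-P'(y)|+|P'(y)-Q(z)|$, which bounds every matched distance by $\ddF(P,P')+\ddF(P',Q)$.

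The remaining check is that these pairs can be ordered into a valid monotone sequence of unit steps from $(1,1)$ to the last pair. This holds because both matchings are monotone and advance by unit steps, so the composed sequence advances monotonically in both $x$ and $z$; any step in which $x$ or $z$ would jump by more than one can be refined into unit steps through intermediate pairs. This is the standard metric property of the discrete \F, so I would cite it rather than reprove it.
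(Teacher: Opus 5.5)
Your proof is correct and matches the paper's argument: the lower bound $\delta_m$ in the first case comes from the definition of $\delta_m$, and in both cases the triangle inequality with $\ddF(P,P')\leq\delta_m/2$, together with $\delta>\delta_m$ in the second case, gives the stated bounds with the same arithmetic. Your two additions, the sketch of the triangle inequality (which the paper uses without comment) and the remark that query curves of complexity less than $m$ are handled by duplicating vertices, are sound; if you write out the matching composition, build it inside each block of a fixed $P'$-index as an L-shaped path, so that every pair you use actually shares an index of $P'$ and no ad hoc refinement of jumps is needed.
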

\begin{proof}
    By definition of $\delta_m$, it holds that $\ddF(P,Q)\geq \delta_m$ for every curve $Q$ of complexity~$m$. Further, by \Cref{lem:distP'P} it holds that $\ddF(P,P')\leq \delta_m/2$.
    Hence, if $\ddF(P', Q)\leq \frac{3}{2}\delta_m$, then
    \[\delta_m\leq \ddF(P,Q)\leq \ddF(P', Q)+\ddF(P,P')\leq \frac{3}{2}\delta_m+\frac{1}{2}\delta_m\leq 2\delta_m.\]
    Otherwise, it holds that $\frac{3}{2}\delta=\ddF(P', Q)>\frac{3}{2}\delta_m$. Hence, $\delta>\delta_m$. Then, 
    \[\delta<\frac{3}{2}\delta-\frac{1}{2}\delta_m\leq \ddF(P', Q)-\ddF(P', P)\leq \ddF(P,Q)\leq \ddF(P', Q)+\ddF(P', P)< 2\delta.\qedhere\]
\end{proof}

Using \Cref{lem:triangleInequPP'Q}, we could simply compute $\ddF(P', Q)$ to get a $2$-approximation algorithm for $\ddF(P,Q)$. However, computing $\ddF(P', Q)$ with known discrete Fréchet distance algorithms takes roughly $\mathcal{O}(m\cdot \sum_{i=1}^{\ell}k_i)$ time and $\sum_{i=1}^{\ell}k_i$ could be $n$. 
In the following, we make use of the special structure of $P'$ to get a faster algorithm that decides whether $\ddF(P', Q)\leq \frac{3}{2}\delta$ for any value $\delta\geq \delta_m$.

We use the concept of the free space matrix $M_{(3/2)\delta}$, but do not compute all reachable entries explicitly.
Define $s_i=\sum_{j=1}^{i-1}k_j+1$ to be the index of the first vertex with value $p_i\pm\delta_m/2$ in $P'$ and $s_{\ell+1}=|P'|+1$.
\Cref{Alg:DecisionP'Q} iterates over the sub-curves $P'[s_i, s_{i+1}-1]$ of $P'$ from~$i=1$ to~$\ell$ and stores in the set~$\FreeOld$ the indices~$j$ of the vertices of $Q$ such that $(s_i-1, j)$ is reachable (see the hatched cells in~\Cref{fig:decision_Algo}). 
Using $\FreeOld$, in iteration~$i$ we successively compute the set $\FreeNew$ of indices such that $(s_{i+1}-1, j)$ is reachable and make use of the following~property.

\begin{observation}\label{obs:distancesToP'}
    For any point $q\in \mathbb{R}$, it holds that exactly one of the following is true:
    \begin{itemize}
        \item $|q-p_i|\leq \delta$ and $|q-P'(s_i+x)|\leq \frac{3}{2}\delta$ for all $x\in \{0, \dots, k_i-1\}$,  or 
        \item $|q-p_i|\in(\delta, 2\delta]$ and $|q-P'(s_i+x)|\leq \frac{3}{2}\delta$ for all even (resp. odd) $x\in \{0, \dots, k_i-1\}$ and $|q-P'(s_i+x)|> \frac{3}{2}\delta$ for all odd (resp. even) $x\in \{0, \dots, k_i-1\}$, or
        \item $|q-p_i|>2\delta$.
    \end{itemize}
\end{observation}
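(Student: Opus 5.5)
The plan is a direct case analysis on $t:=|q-p_i|$, after first pinning down what the block $P'[s_i,s_i+k_i-1]$ looks like.

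\textbf{Step 1 (structure of a block).} By construction of the compressed simplification, every vertex $P'(s_i+x)$ with $x\in\{0,\dots,k_i-1\}$ is one of the two values $p_i\pm\delta_m/2$. Consecutive identical vertices were deleted, so inside the block the two values strictly alternate. The block starts with $p_i+z_i\delta_m/2$. Hence
\[P'(s_i+x)=p_i+(-1)^x z_i\,\delta_m/2 .\]
So the even-indexed vertices all equal one of the two values and the odd-indexed vertices all equal the other. This is exactly the ``even (resp.\ odd)'' dichotomy in the statement, and the choice between even and odd is determined by the sign of $q-p_i$ together with $z_i$.

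\textbf{Step 2 (the three cases).} The three conditions on $|q-p_i|$, namely $[0,\delta]$, $(\delta,2\delta]$ and $(2\delta,\infty)$, partition $[0,\infty)$. Thus at most one case holds, and it suffices to verify the stated distance claims in each case.

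In the first case, $|q-(p_i\pm\delta_m/2)|\le \delta+\delta_m/2\le\frac32\delta$, using $\delta\ge\delta_m$. So every vertex of the block is within $\frac32\delta$ of $q$.

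In the second case, assume without loss of generality $q>p_i$. The near value $p_i+\delta_m/2$ has distance $t-\delta_m/2$ to $q$. The far value $p_i-\delta_m/2$ has distance $t+\delta_m/2$ to $q$. One then compares these with $\frac32\delta$ using $t\in(\delta,2\delta]$. With the parity fixed by Step 1, this gives the claimed split: one parity class is within $\frac32\delta$ and the other is farther.

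The third case requires no distance claim.

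\textbf{Main obstacle.} The delicate point is the second case, where the comparisons are tight. We have $t-\delta_m/2\le 2\delta-\delta_m/2$ and $t+\delta_m/2>\delta+\delta_m/2$. Both bounds meet $\frac32\delta$ exactly when the offset $\delta_m/2$ equals $\delta/2$. The plan is therefore to carry out the check in the regime where the offset of the compressed simplification matches the query threshold, i.e.\ $\delta=\delta_m$. This is the base value at which the query algorithm first tests $\ddF(P',Q)\le\frac32\delta_m$.

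For larger $\delta$, I would first try to restore the same tightness by regarding the block values as $p_i\pm\delta/2$. Comparisons at scale $\frac32\delta$ are then coarse enough that a constant-offset rescaling of $P'$ does not change which parity class is close. If this does not go through, the observation should be applied only with $\delta=\delta_m$ in the subsequent algorithm.
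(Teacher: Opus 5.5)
Your case analysis is the natural argument (the paper states this observation without proof), and everything you actually verify is correct. This includes the alternation $P'(s_i+x)=p_i+(-1)^x z_i\delta_m/2$, the fact that the three ranges of $|q-p_i|$ are disjoint, the first case for every $\delta\ge\delta_m$ (since $\delta+\delta_m/2\le\frac32\delta$), and the second case when $\delta=\delta_m$. You also found the right trouble spot. With $t=|q-p_i|$, the two parity classes are at distances $t\mp\delta_m/2$ from $q$. These straddle $\frac32\delta$ for all $t\in(\delta,2\delta]$ only when $\delta=\delta_m$.

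The gap is your treatment of $\delta>\delta_m$. You claim that replacing the offsets by $\pm\delta/2$ ``does not change which parity class is close''. That is false, and no argument can close this case, because the observation as stated fails for every $\delta>\delta_m$ once $k_i\ge 2$. Take $\delta_m=1$, $\delta=2$, $p_i=0$, so the block alternates between $\pm\frac12$ and $\frac32\delta=3$.
\begin{itemize}
\item For $q=3.9$, both distances $3.4$ and $4.4$ exceed $3$.
\item For $q=2.1$, both distances $1.6$ and $2.6$ are at most $3$.
\end{itemize}
In both examples $|q-p_i|\in(\delta,2\delta]$, so none of the three alternatives holds.

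Your fallback, applying the observation only at $\delta=\delta_m$, is not enough either. The query algorithm runs the decision procedure at values $\delta>\delta_m$ while searching for $\ddF(P',Q)$. The statement itself has to be repaired, and there are two ways to do it.
\begin{itemize}
\item Keep $P'$ and replace the thresholds $\delta$ and $2\delta$ by $\frac32\delta-\frac{\delta_m}{2}$ and $\frac32\delta+\frac{\delta_m}{2}$. Your computation then goes through verbatim, though the corresponding tests in the decision algorithm must be adapted.
\item As you hint, use the curve $P'_\delta$ whose block values are $p_i\pm\delta/2$. This is a different curve, so it does not prove anything about $P'$. What justifies it is not invariance of the parity classes. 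It is that $\ddF(P,P'_\delta)\le\delta/2$ still holds, because $|P(x)-p_y|\le\delta_m\le\delta$. Hence the triangle-inequality argument still gives factor $2$, and for $P'_\delta$ the thresholds $\delta$ and $2\delta$ are exactly tight.
\end{itemize}
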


By \Cref{obs:distancesToP'}, it holds that whenever $|Q(j)-p_i|\leq \delta$, there exists an number $\numbA$ such that $(s_i+x, j)$ with $x\in \{0, \dots, k_i-1\}$ is reachable if and only if $x\geq \numbA$ and whenever $|Q(j)-p_i|> 2\delta$, then $(s_i+x, j)$ is not reachable for any integer $x\in \{0, \dots, k_i-1\}$ and we set $\numbA=\infty$. When $|Q(j)-p_i|\in(\delta, 2\delta]$, things get slightly more complicated. However, we can still store a number $\numbA$ such that $(s_i+x, j)$ is reachable for all $x\in \{\numbA, \numbA+2, \dots\}$ with $x\leq k_i-1$. Since there might be more reachable pairs $(s_i+x, j)$, we store a set $\setA$ of size at most $m$ and a number $\updates$ which can be used to reconstruct all other reachable pairs $(s_i+x, j)$ with $x\in \{0, \dots, \numbA\}$. 
Then, let $B\subset\{0, \dots,\min\{k_i-1, \numbA\}\}$ be the set such that for all $x\leq \min\{k_i-1, \numbA\}$ it holds that $(s_i+x, j-1)$ is reachable if and only if $x\in B$. It holds that $B\neq \emptyset$ only if $|Q(j-1)-p_i|\in(\delta, 2\delta]$. In that case, define $z=0$ (resp. $z=1$) if $Q(j-1)$ and $Q(j)$ lie on the same side (resp. different sides) of $p_i$. Then for any $x+z>0$, it holds that that $(s_i+x+z, j-1)$ is reachable if and only if $x\in B$. Using this property, we store a counter~$\updates$ that counts the sum of the values $z$ until we reach again a vertex with $|Q(j')-p_i|\notin(\delta, 2\delta]$. If $x+z=0$, we additionally check whether $(s_i-1, j-1)$ or $(s_i-1, j)$ is reachable by looking at the set~$\FreeOld$. If one of them is reachable, then $(s_i, j-1)=(s_i+(\updates-\updates), j)$ is reachable and we add the current number~$\updates$ to the stored set~$\setA$. In short, we intend to maintain the following invariant. During iteration~$i$ of the outer for-loop and at the end of iteration~$j$ of the inner for-loop, it holds that
$(s_i+x, j)$ is reachable with $x\in \{0, \dots, k_i-1\}$ if and only if
\begin{align*}
    &\textbf{a) } x\in \{u- z\mid z\in \setA\}\cup \{\numbA\} &\text{ or }
    &&\textbf{b) }x>\numbA \text{ and }|Q(j)-P'(s_i+x)|\leq (2/3)\delta.
\end{align*}
See \Cref{fig:decision_Algo} for an example.
Here, during iteration~$i=3$ of the outer for-loop and at the end of iteration~$j=7$ (resp. $j=8$) of the inner for-loop, it holds that $\setA=[0, 2]$, $\updates=2$, and $\numbA=\infty$ (resp. $\setA=[\ ]$, $\updates=0$, and $\numbA=0$).

\begin{algorithm}
    \caption{Decision Algorithm for $\ddF(P',Q)\leq \frac{3}{2}\delta$ \label{Alg:DecisionP'Q}}
    $\FreeNew\gets \{0\}$\;
    \For{$i=1$ to $\complSimp$}{
        $\setA\gets[\ ]$, $\numbA\gets \infty$, $\updates\gets0$, $\FreeOld\gets\FreeNew$, $\FreeNew\gets[\ ]$\;
        \For{$j=1$ to $m$}{
            \If{$|Q(j)-p_i|>2\delta$\label{li:case>2delta}}{
                $\setA\gets[\ ]$, $\numbA\gets \infty$, $\updates\gets 0$\;
            }
            \If{$|Q(j-1)-Q(j)|>2\delta$ and $|Q(j)-p_i|, |Q(j-1)-p_i|\in (\delta, 2\delta]$\label{li:changenumbAif}}
            {
                $\numbA\gets\numbA+1$, $\updates\gets \updates+1$\label{li:changenumbA}\;
            }
            \If{($j\in \FreeOld$ or $j-1\in \FreeOld$) and $|P'(s_i)-Q(j)|\leq \frac{3}{2}\delta$\label{li:InOld}}{
                Append $\updates$ to $\setA$\;
            }
            \If{$|Q(j)-p_i|\leq \delta$}{
                $\numbA\gets\min\{\numbA, u-\min\{\{x\mid x\in \setA\}\}\cup\{\infty\}\}$\label{li:case<delta}\;
                $\setA\gets [\ ]$, $\updates\gets 0$\label{li:case<deltaAemptyset}\;
            }
            Delete all $x\in A$ with $x\leq u-k_i$\;
            \If{$|P'(s_{i+1}-1)-Q(j)|\leq \frac{3}{2}\delta$ and ($a<k_i$ or $\updates-k_i+1\in \setA$)\label{li:addToFreNewif}}{
                $\FreeNew\gets j$\label{li:addToFreNew}\;
            }    
        }
    }
\end{algorithm}

\begin{figure}
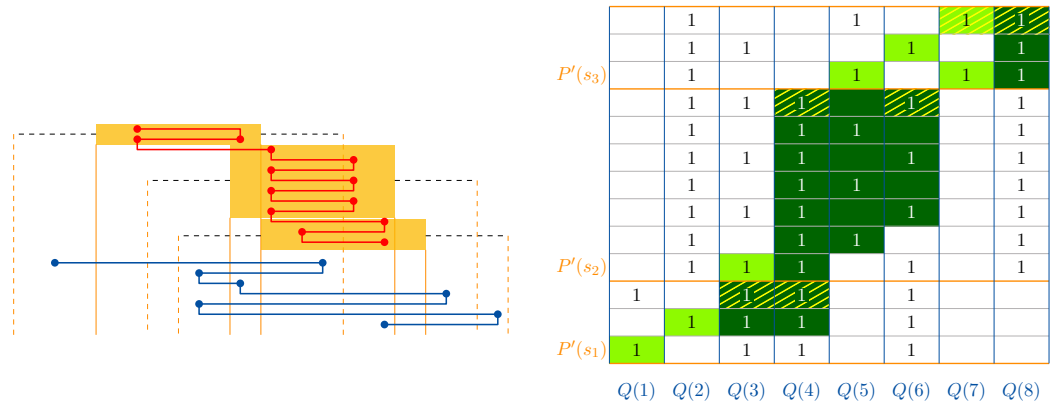

    \centering
    \includegraphics[width=0.47\linewidth, page=2]{Figures/Decision_Algorithm.pdf}\hspace{0.5cm} 
    \includegraphics[width=0.47\linewidth, page=3]{Figures/Decision_Algorithm.pdf}
    \caption{The compressed curve $P'$ and the query curve $Q$ are on the left. On the right is the free space matrix of $P'$ and $Q$, where only $1$-entries are added. The dark green cells visualize the parameter~$a$ and the light green cells the set $\{u-z\mid z\in A\}$. The hatched cells are the cells that have been in $\FreeOld$ and $\FreeNew$.}
    \label{fig:decision_Algo}
\end{figure}

\begin{lemma}\label{lem:PropertyFreeOldNew}
    Assume that at the beginning of iteration $i$ of the outer for-loop after setting $\FreeOld\gets \FreeNew$, it holds $(s_{i}-1, j)$ is reachable if and only if $j\in \FreeOld$.
    Then, at the end of iteration~$i$ it holds that $(s_{i+1}-1, j)$ is reachable if and only if $j\in \FreeNew$.
\end{lemma}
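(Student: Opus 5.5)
The plan is to prove, by induction on $j$ inside iteration~$i$, the invariant stated just before \Cref{Alg:DecisionP'Q}. At the end of inner iteration~$j$, $(s_i+x,j)$ with $x\in\{0,\dots,k_i-1\}$ is reachable if and only if a) $x\in\{\updates-z\mid z\in\setA\}\cup\{\numbA\}$, or b) $x>\numbA$ and $|Q(j)-P'(s_i+x)|\le\frac32\delta$. (I read the bound in b) as $\frac32\delta$, the free-space threshold.)

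Once the invariant holds, the claim follows from the test in line~\ref{li:addToFreNewif}. The pair $(s_{i+1}-1,j)=(s_i+k_i-1,j)$ is reachable exactly when $x=k_i-1$ satisfies a) or b). Condition a) means $\updates-k_i+1\in\setA$, or $\numbA=k_i-1$. Condition b) means $\numbA<k_i-1$ together with the free-space condition. Combining the two cases, and noting that $\numbA=k_i-1$ forces $P'(s_{i+1}-1)$ to be free by \Cref{obs:distancesToP'}, we get exactly the condition of line~\ref{li:addToFreNewif}. So $j$ is added to $\FreeNew$ if and only if $(s_{i+1}-1,j)$ is reachable.

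For the base case $j=0$ nothing is reachable: $\setA$ is empty and $\numbA=\infty$. For the inductive step I use the reachability recurrence. $(s_i+x,j)$ is reachable iff it is free and at least one of $(s_i+x-1,j)$, $(s_i+x,j-1)$, $(s_i+x-1,j-1)$ is reachable, where for $x=0$ the predecessors in column $s_i-1$ are read off $\FreeOld$ by the hypothesis of the lemma. I then split by the three cases of \Cref{obs:distancesToP'} for $q=Q(j)$.

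\textbf{Case $|Q(j)-p_i|>2\delta$.} Row $j$ of this block is entirely non-free. The reset in line~\ref{li:case>2delta} gives empty a) and b), which is correct.

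\textbf{Case $|Q(j)-p_i|\le\delta$.} Row $j$ is entirely free, so the reachable set is upward closed from the smallest $x$ that has a reachable predecessor. That smallest $x$ is $0$ if $j-1\in\FreeOld$ or $j\in\FreeOld$, and otherwise one plus the smallest reachable $x$ in row $j-1$. Applying the inductive invariant to row $j-1$, this minimum is $\min(\{\numbA\}\cup\{\updates-z: z\in\setA\})$, up to the shift between rows. Here I must check the bookkeeping: line~\ref{li:InOld} appends the current $\updates$, which encodes $x=0$, and line~\ref{li:case<delta} takes the minimum. After the reset in line~\ref{li:case<deltaAemptyset}, only $\numbA$ remains, and b) covers everything above it.

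\textbf{Case $|Q(j)-p_i|\in(\delta,2\delta]$.} This is the main obstacle. Row $j$ is free exactly on one parity class of $x$. If $Q(j-1)$ is on the same side of $p_i$ (or within $\delta$ of it), the free cells of rows $j-1$ and $j$ align. A reachable $x$ in row $j-1$ then propagates vertically to $x$, and also diagonally, but the diagonal move lands on a non-free cell. If $Q(j-1)$ is on the opposite side, the parity flips, so reachable cells propagate only diagonally, from $x$ to $x+1$. This is the shift $\numbA\gets\numbA+1$, $\updates\gets\updates+1$ of line~\ref{li:changenumbA}, which keeps the encoding $\updates-z$ pointing to the correct $x$. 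Horizontal moves within row $j$ never help, since consecutive cells alternate between free and non-free.

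To complete this case I still need to justify three points. First, the test $|Q(j-1)-Q(j)|>2\delta$ detects "opposite sides" when both points are in the annulus. Second, deleting entries with $x\le\updates-k_i$ only removes indices beyond $k_i-1$. Third, new reachability at $x=0$ enters only via $\FreeOld$, as line~\ref{li:InOld} handles. I would verify each transition against the recurrence with care, since off-by-one errors in $\updates$ and parity are the likely failure points.
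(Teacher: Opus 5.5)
Your plan is the paper's argument. It uses the same invariant, induction on $j$, the three-way split from \Cref{obs:distancesToP'}, and the same read-off at line~\ref{li:addToFreNewif}. The three points you postpone are routine: annulus points on the same side of $p_i$ are less than $\delta$ apart and on opposite sides more than $2\delta$ apart; the deletion removes exactly the $z$ with $\updates-z\ge k_i$; and new reachability at index $0$ comes only from column $s_i-1$. Two steps are wrong, though. First, in the case $|Q(j)-p_i|\le\delta$ the least reachable index of row $j$ is not ``one plus'' that of row $j-1$. The vertical step $(s_i+x,j-1)\to(s_i+x,j)$ is allowed and row $j$ is entirely free. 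So the least index equals $\min(\{\numbA_{j-1}\}\cup\{\updates_{j-1}-z\mid z\in\setA_{j-1}\})$, or $0$ via $\FreeOld$. This is what line~\ref{li:case<delta} is meant to compute; there is no shift between rows to absorb.

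The serious gap is the parenthetical ``(or within $\delta$ of it)'' in the annulus case. If $|Q(j-1)-p_i|\le\delta$, row $j-1$ is entirely free with reachable set $\{x\ge\numbA_{j-1}\}$, while row $j$ is free on only one parity class. Suppose $P'(s_i+\numbA_{j-1})$ lies on the other side of $p_i$ from $Q(j)$. Then $(s_i+\numbA_{j-1},j)$ is not free, yet line~\ref{li:changenumbAif} does not fire because it needs both points in the annulus. So $\numbA_j=\numbA_{j-1}$, and clause a) claims a non-free cell is reachable. Your induction breaks here, and so does the paper's case 3(a), which skips the same check. No better argument can close this, because for the algorithm as printed the lemma fails. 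The stale $\numbA$ survives into a following row with $|Q(j+1)-p_i|\le\delta$, where the cell is free but unreachable.

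For a concrete instance, take $\delta=\delta_m=2$ and $i=1$ (so $\FreeOld=\{0\}$), with $p_1=0$, $k_1=2$, $P'(1)=1$, $P'(2)=-1$. This block arises, e.g., from $P=\langle 2,-2,102,98,\dots,402,398\rangle$ with $m=9$. Let $Q=\langle 3,-3,0,3,0\rangle$. The algorithm sets $\numbA=1$ in row $3$, keeps it through rows $4$ and $5$, and puts $5$ into $\FreeNew$. But $(2,5)$ is unreachable: $(2,4)$ is non-free, and $(1,3),(1,4),(1,5)$ are unreachable because $(1,2)$ is non-free.

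What is missing is a repair of the algorithm. For example, when $|Q(j-1)-p_i|\le\delta<|Q(j)-p_i|\le 2\delta$ and $(s_i+\numbA,j)$ is not free, set $\numbA\gets\numbA+1$ without changing $\updates$. Read line~\ref{li:case<delta} as $\numbA\gets\min(\{\numbA\}\cup\{\updates-z\mid z\in\setA\})$, as you do. Then each transition preserves the invariant and your outline goes through.

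Two smaller remarks. At the read-off, freeness of $(s_i+\numbA,j)$ follows from the invariant, not from the Observation; the explicit free-space test in line~\ref{li:addToFreNewif} is what hides the defect within a single row. Also, \Cref{obs:distancesToP'} is exact only for $\delta=\delta_m$, since the vertices of $P'$ lie at $p_i\pm\delta_m/2$. For $\delta>\delta_m$ its thresholds $\delta$ and $2\delta$, and the algorithm's, must become $\frac32\delta\mp\frac{\delta_m}{2}$.
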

\begin{proof}
    Denote with $\setA_j$, $\numbA_j$, and $\updates_j$ the value of $\setA$, $\numbA$, and $\updates$ at the end of iteration $j$ in the inner for-loop of \Cref{Alg:DecisionP'Q}. Then, we show that the following invariant is true; for every $x\in\{0, \dots, k_i-1\}$, it holds that $(s_i+x, j)$ is reachable if and only if 
    \begin{align*}
    &\textbf{a) } x\in \{\updates_j- z\mid z\in \setA_j\}\cup \{\numbA_j\} &\text{ or }
    &&\textbf{b) }x>\numbA_j \text{ and }|Q(j)-P'(s_i+x)|\leq (2/3)\delta.
\end{align*}
    We prove this invariant via induction on $j$.
    For $j=0$, the invariant clearly holds as $\{u_j-z\mid z\in \setA_j\}\cup \{\numbA_j\}=\{\infty\}$.
    Now assume the invariant holds for $j-1$. 
    Let $x\in\{0, \dots, k_i-1\}$. Then, by definition of the discrete \F $(s_i+x, j)$ is reachable if and only if $|Q(j)-P(s_i+x)|\leq \frac{2}{3}\delta$ and $(s_i+x-1, j)$, $(s_i+x, j-1)$, or $(s_i+x-1, j-1)$ is reachable. 
    We consider three different cases. 
    \begin{enumerate}
        \item If $|Q(j)-p_i|>2\delta$, then $(s_i+x, j)$ is not reachable, since $|Q(j)-P(s_i+x)|> \frac{2}{3}\delta$. By line~\ref{li:case>2delta} of \Cref{Alg:DecisionP'Q}, the invariant holds.
        \item If $|Q(j)-p_i|\leq \delta$, then $|Q(j)-P(s_i+x)|\leq\frac{2}{3}\delta$ for any $x\in \{0, \dots, k_i-1\}$. Further, observe that if $(s_i+y, j)$ is reachable for $y\in \{0, \dots, k_i-1\}$, then $(s_i+x, j)$ is also reachable for every $y\leq x \leq  k_i-1$. Define $y\in \{0, \dots, k_i-1\}$ to be the minimum such that $(s_i+y, j)$ is reachable. 
        Hence, by line~\ref{li:case<deltaAemptyset} it holds that the invariant holds if $a_j=y$. 
        Unless $y=0$ and $(s_i-1, j)$ or $(s_i-1, j-1)$ is reachable, it must hold that $(s_i+y, j-1)$ is reachable as otherwise $y$ would not be the minimum. Further again since $y$ is the minimum and by the invariant for $j-1$, it holds that $y=\min\{u_{j-1}-z\mid z\in \setA_{j-1}\}\cup \{\numbA_{j-1}\}$. 
        Now, consider the special case that $y=0$ and $(s_i-1, j)$, or $(s_i-1, j-1)$ is reachable. By assumption, it holds that $(s_i-1, j)$ (resp. $(s_i-1, j-1)$) is reachable if and only if $j\in \FreeOld$ (resp. $j-1\in \FreeOld$).
        Hence, $a_j$ is set to $y$ by line \ref{li:InOld}--\ref{li:case<delta} and the invariant is true. 
        \item It remains the case that $\delta<|Q(j)-p_i| \leq 2\delta$.
        \begin{enumerate}
            \item First, consider the case that $x\geq \numbA_{j-1}$. Then since $(s_i+\numbA_{j-1}, j-1)$ is reachable by the induction hypothesis, it holds that $|Q(j-1)-p_i|\leq 2\delta$. Hence, $|Q(j-1)-P(s_i+x)|\leq \frac{2}{3}\delta$ or $|Q(j-1)-P(s_i+x-1)|\leq \frac{2}{3}\delta$. Further again by the induction hypothesis it holds that $(s_i+x, j-1)$ or $(s_i+x-1, j-1)$ is reachable. Hence, by line~\ref{li:changenumbAif}--\ref{li:changenumbA} the invariant holds for $x\geq \numbA_{j-1}$.
            \item Otherwise, it holds that $x<\numbA_{j-1}$. Consider the special case that $x=0$ and $(s_i+x, j)$ is reachable because $|Q(j)-P(s_i+x)|\leq \frac{2}{3}\delta$ and $(s_i-1, j)$ or $(s_i-1, j-1)$ are reachable. Then by \ref{li:InOld} and the assumption on $\FreeOld$, it holds that $u_{j}\in A_j$. 
            Otherwise it holds that $(s_i+x, j)$ is reachable if and only if $(s_i+x, j-1)$ or $(s_i+x-1, j-1)$ is reachable and $|Q(j)-p_i|\leq \frac{3}{2}\delta$. This is true because it holds that either ${|Q(j)-P(s_i+x)|\leq \frac{2}{3}\delta}$ or $|Q(j)-P(s_i+x-1)|\leq \frac{2}{3}\delta$ since $\delta<|Q(j)-p_i|\leq 2\delta$.
            Hence, $x-1\in \setA_{j-1}$ or $x\in\setA_{j-1}$.
            Therefore, $\setA_{j-1}\neq[\ ]$ and by the algorithm it holds that $\delta<|Q(j-1)-p_i|\leq 2\delta$.
            If $\min\{Q(j-1), Q(j)\}>p_i$ or $\max\{Q(j-1),Q(j)\}<p_i$, then $(s_i+x, j)$ is reachable if and only if $(s_i+x, j-1)$ is reachable by \Cref{obs:distancesToP'}. Otherwise $(s_i+x, j)$ is reachable if and only if $(s_i+x-1, j-1)$ is reachable. This is ensured by line~\ref{li:changenumbAif}--\ref{li:changenumbA}.
        \end{enumerate}
    \end{enumerate}
    This proves that the invariant is true for every $j$. Note that $(s_{i+1}-1, j)=(s_{i}+k_i-1, j)$. By the invariant it holds that $(s_{i}+k_i-1, j)$ is reachable if and only if $|P'(s_{i+1}-1)-Q(j)|\leq \frac{3}{2}\delta$ and $a_j<k_i$ or $u_j-k_i+1\in A_j$. Hence the lemma follows by line~\ref{li:addToFreNewif}--\ref{li:addToFreNew}.
\end{proof}

\begin{lemma}\label{lem:1Ddecision}
    For any value $\delta\geq \delta_m$, we can decide whether $\ddF(P', Q)\leq \frac{3}{2}\delta$ in $\mathcal{O}(m\ell)$~time. 
\end{lemma}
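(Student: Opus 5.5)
The plan is to run \Cref{Alg:DecisionP'Q} and read off the answer from its final state. Correctness comes from \Cref{lem:PropertyFreeOldNew} by induction on the outer loop; the running time comes from charging every inner-loop step against a bounded amount of work on $\setA$, so that each step costs amortized $\mathcal{O}(1)$.

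\textbf{Correctness.} First we fix the base case of the induction. With the convention that index $0$ in $\FreeNew=\{0\}$ encodes the virtual start cell, the hypothesis of \Cref{lem:PropertyFreeOldNew} holds before iteration $i=1$. That is, $(s_1-1,j)=(0,j)$ is ``reachable'' exactly for $j=0$, so the only entry $(1,j)$ that can be entered from the left is $(1,1)$. This is precisely what line~\ref{li:InOld} checks with $j-1\in\FreeOld$. Applying \Cref{lem:PropertyFreeOldNew} inductively for $i=1,\dots,\ell$ then gives the following: after the last iteration, $(s_{\ell+1}-1,j)=(|P'|,j)$ is reachable if and only if $j\in\FreeNew$. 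Since $\ddF(P',Q)\leq\frac32\delta$ holds iff $(|P'|,m)$ is reachable, we answer ``yes'' iff $m\in\FreeNew$. One detail needs care here: line~\ref{li:addToFreNew} must append $j$ to $\FreeNew$ rather than overwrite it, and we read it that way.

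\textbf{Running time.} There are $\ell\cdot m$ inner iterations, and everything outside the handling of $\setA$ takes $\mathcal{O}(1)$: comparisons of $Q(j)$ with $p_i$ and with $P'(s_i)$, $P'(s_{i+1}-1)=p_i\pm\delta_m/2$, and updates of $\numbA,\updates$. For $\FreeOld$, we store both $\FreeOld$ and $\FreeNew$ as boolean arrays of length $m+1$, or as sorted lists, since $j$ is inserted in increasing order. Then the membership tests $j\in\FreeOld$ and $j-1\in\FreeOld$ cost $\mathcal{O}(1)$.

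\textbf{Handling $\setA$.} This is the main obstacle, because $\setA$ may hold up to $m$ elements while we need amortized constant time per step.
\begin{itemize}
\item Values are appended to $\setA$ in nondecreasing order, since $\updates$ only grows between resets. So we keep $\setA$ as a deque sorted increasingly.
\item With this representation, $\min \setA$ in line~\ref{li:case<delta} is the front element.
\item The deletion of all $x\leq \updates-k_i$ pops elements from the front. Each element is appended once per $(i,j)$ and deleted at most once, so over one outer iteration the total deletion work is $\mathcal{O}(m)$.
\item Resetting $\setA\gets[\ ]$ is done lazily, by clearing the deque in time proportional to its size, which is charged to the insertions.
\item The membership test $\updates-k_i+1\in\setA$ in line~\ref{li:addToFreNewif} would naively cost $\mathcal{O}(\log m)$ via binary search. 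After the deletion step every remaining element is $>\updates-k_i$, so $\updates-k_i+1$ is the smallest admissible value. Hence the test reduces to checking whether the front element equals $\updates-k_i+1$, which is $\mathcal{O}(1)$.
\end{itemize}

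Summing over all inner iterations, the total time is $\mathcal{O}(m\ell)$. Storing only $p_i,k_i,z_i$ makes $P'(s_i)$ and $P'(s_{i+1}-1)$ computable in $\mathcal{O}(1)$: they equal $p_i+z_i\delta_m/2$ and, depending on the parity of $k_i$, $p_i\pm z_i\delta_m/2$.
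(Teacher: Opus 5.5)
Your proposal is correct and follows the paper's own proof. You run \Cref{Alg:DecisionP'Q}, answer yes iff $m\in\FreeNew$, get correctness from \Cref{lem:PropertyFreeOldNew} by induction over the outer loop, and obtain the $\mathcal{O}(m\ell)$ bound from the facts that $\setA$ stays sorted and receives at most one append per inner iteration; your extra details (the base case encoded by $\FreeNew=\{0\}$, reading line~\ref{li:addToFreNew} as an append, and reducing the test $\updates-k_i+1\in\setA$ to an $\mathcal{O}(1)$ front check after the deletions) only make explicit what the paper's one-line running-time argument leaves implicit.
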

\begin{proof}
    We run \Cref{Alg:DecisionP'Q} and output $\ddF(P', Q)\leq \frac{3}{2}\delta$ if and only if $m\in \FreeNew$ at the end of \Cref{Alg:DecisionP'Q}. The running time of \Cref{Alg:DecisionP'Q} is in $\mathcal{O}(m\ell)$, since the array $\setA$ is sorted and in every iteration of the inner for loop at most one point gets appended to $\setA$.
    By induction on the index $i$ of $s_i$ and \Cref{lem:PropertyFreeOldNew}, it holds that $(|P'|, m)=(s_{\ell+1}-1, m)$ is reachable if and only if $\ddF(P', Q)\leq \frac{3}{2}\delta$.
\end{proof}

It remains to optimize the~$\delta$ value.
First, we decide whether $\ddF(P', Q)\leq \frac{3}{2}\delta_m$ or not using \Cref{Alg:DecisionP'Q}. If $\ddF(P', Q)\leq \frac{3}{2}\delta_m$, then $\ddF(P,Q)\in [\delta_m, 2\delta_m]$ by \Cref{lem:triangleInequPP'Q} and $\delta_m$ is a $2$-approximation for $\ddF(P,Q)$. 
Otherwise $\ddF(P', Q)>\frac{3}{2}\delta_m$. In this case, we use the observation that the discrete \F between $P'$ and $Q$ gets realizes by a point to point distance between a vertex of $P'$ and a vertex of $Q$. Hence, 
\[\ddF(P', Q)\in\{|(p_i\pm\delta/2)-Q(j)|\mid i\in \{1, \dots, \ell\}, j\in \{1, \dots, m\}\}.\]
We sort these values in $\mathcal{O}(m^2\log m)$ time and search over the ones that are greater~$\frac{3}{2}\delta_m$. Using \Cref{Alg:DecisionP'Q}, we compute in $\mathcal{O}(m^2\log m)$ time $\frac{3}{2}\delta=\ddF(P', Q)$. 
Then, it holds that $\ddF(P,Q)\in [\delta, 2\delta]$ by \Cref{lem:triangleInequPP'Q}.
Therefore, we get the following theorem:

\begin{theorem}\label{thm:1DUpperDS}
    For a given value $m$, there exists a data structure that preprocesses a 1-dimensional curve $P$ of complexity $n$ in $\mathcal{O}(n\log n)$ time using $\mathcal{O}(m)$ space such that for any 1-dimensional query curve $Q$ of complexity at most $m$, we can approximate the discrete \F between $P$ and $Q$ within a factor of $2$ in $\mathcal{O}(m^2\log m)$ time.
\end{theorem}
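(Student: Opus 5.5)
The proof assembles the pieces developed in this section into a preprocessing phase and a query phase, and then bounds space, time and approximation factor separately.

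\textbf{Preprocessing.} Given $P$ and $m$, we first compute an optimal $m$-simplification $P^*=\langle p_1,\dots,p_\ell\rangle$ with $\ell\le m$, the value $\delta_m$, and the discrete matching produced by \Cref{Alg:deltaSimpl}. This takes $\Oh(n\log n)$ time: by \Cref{lem:deltaSimpl} one call of \Cref{Alg:deltaSimpl} costs $\Oh(n)$, and we search over the candidate values $|P(i)-P(j)|/2$ with the Frederickson--Johnson technique using $\Oh(\log n)$ calls. From this matching, \Cref{lem:distP'P} builds in $\Oh(n)$ time the compressed simplification $P'$ with $\ddF(P,P')\le\delta_m/2$. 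We store only $\delta_m$ and the triples $(p_j,k_j,z_j)$ for $j\le\ell$, which is $\Oh(m)$ space. These values determine $P'$ completely, since each block of $P'$ alternates between $p_j\pm\delta_m/2$, starting with sign $z_j$, for $k_j$ vertices.

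\textbf{Query.} Given $Q$ of complexity at most $m$, we first run the decision procedure of \Cref{lem:1Ddecision} with $\delta=\delta_m$, which costs $\Oh(m\ell)=\Oh(m^2)$.
\begin{itemize}
\item If it answers yes, i.e.\ $\ddF(P',Q)\le\frac32\delta_m$, we output $\delta_m$. By \Cref{lem:triangleInequPP'Q}, $\ddF(P,Q)\in[\delta_m,2\delta_m]$, so $\delta_m$ is a $2$-approximation.
\item Otherwise, $\ddF(P',Q)$ equals the distance between some vertex of $P'$ and some vertex of $Q$. Every vertex of $P'$ has the form $p_i\pm\delta_m/2$, so the candidate set $\{|p_i\pm\delta_m/2-Q(j)|\}$ has $\Oh(m\ell)=\Oh(m^2)$ elements. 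We sort it in $\Oh(m^2\log m)$ time and binary search over the candidates exceeding $\frac32\delta_m$. Each probe with a candidate $\frac32\delta$ satisfies $\delta\ge\delta_m$, so \Cref{lem:1Ddecision} applies and costs $\Oh(m^2)$. After $\Oh(\log m)$ probes we find $\frac32\delta=\ddF(P',Q)$ exactly, and by \Cref{lem:triangleInequPP'Q} we have $\ddF(P,Q)\in[\delta,2\delta)$, so we output $\delta$.
\end{itemize}
Hence the query time is $\Oh(m^2\log m)$ and the output is within a factor $2$ of $\ddF(P,Q)$.

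\textbf{Main obstacle.} The correctness of the decision procedure is the delicate part. It needs \Cref{lem:PropertyFreeOldNew}, applied by induction over the blocks $i=1,\dots,\ell$, so that reachability of the boundary column $s_{i+1}-1$ is computed exactly without scanning all $k_i$ columns of a block. The running-time claim also requires care. The set $\setA$ receives at most one append per inner iteration, and both deleting elements $\le u-k_i$ and testing membership of $u-k_i+1$ must be amortized constant time. This holds because $\setA$ is sorted and these operations only affect its front. Finally, queries with complexity below $m$ are still covered: $\delta_m$ is a lower bound for every curve of complexity at most $m$, since duplicating vertices of a shorter $Q$ does not change its discrete Fréchet distance to $P$.
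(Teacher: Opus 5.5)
Your proposal is correct and follows the paper's own argument essentially step for step. The preprocessing is the same: \Cref{Alg:deltaSimpl} with Frederickson--Johnson search, followed by the compressed simplification $P'$ of \Cref{lem:distP'P}. The query is also the same: first test $\ddF(P',Q)\le\frac32\delta_m$ with \Cref{Alg:DecisionP'Q}, otherwise binary-search the $\Oh(m\ell)$ vertex-to-vertex candidates, and conclude via \Cref{lem:triangleInequPP'Q}. Your extra remarks fill in details the paper leaves implicit without changing the route: the front-only, amortized operations on $\setA$, and the vertex-duplication argument showing that $\delta_m$ also lower-bounds queries of complexity below $m$.
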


The next corollary is a direct consequence of the theorem above.
\begin{corollary}\label{cor:1DUpper}
    Given two 1-dimensional curves $P$ and $Q$ of complexity $n$ and $m$ with $n\geq m$. There exists a $2$-approximation algorithm to compute the discrete \F between $P$ and $Q$ in $\mathcal{O}(n\log n+m^2\log m)$ time. 
\end{corollary}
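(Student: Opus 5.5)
The corollary follows by instantiating the data structure of \Cref{thm:1DUpperDS} with the parameter $m$ set to the complexity of $Q$, and then issuing a single query with $Q$. The plan has three steps: build the structure on $P$, run one query, and add the two running times. We may assume $m\le n$, as in the statement.

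\emph{Preprocessing.} We build the data structure for $P$ with parameter $m$. By \Cref{thm:1DUpperDS} this takes $\mathcal{O}(n\log n)$ time. Concretely, it does the following.
\begin{itemize}
\item It computes the optimal $m$-simplification $P^*$ together with $\delta_m$, using Frederickson--Johnson search over the candidate values $|P(i)-P(j)|/2$. Each candidate is tested with \Cref{Alg:deltaSimpl}, so this costs $\mathcal{O}(n\log n)$ in total.
\item It computes the compressed simplification $P'$ in $\mathcal{O}(n)$ time, by \Cref{lem:distP'P}.
\end{itemize}

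\emph{Query.} We query with $Q$, which has complexity exactly $m$ and is therefore admissible. The query first runs \Cref{Alg:DecisionP'Q} with $\delta=\delta_m$. If necessary, it then sorts the $\mathcal{O}(m\ell)=\mathcal{O}(m^2)$ candidate values $|(p_i\pm\delta_m/2)-Q(j)|$ and binary searches over them, using \Cref{lem:1Ddecision}.

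By \Cref{lem:triangleInequPP'Q}, the reported value $\delta$ satisfies $\ddF(P,Q)\in[\delta,2\delta]$, so it is a $2$-approximation. The query costs $\mathcal{O}(m^2\log m)$ time.

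\emph{Total.} Adding the two phases gives $\mathcal{O}(n\log n+m^2\log m)$.

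The only point that needs care is that the algorithm must use the \emph{known} value $m$ when building the structure, and that \Cref{lem:triangleInequPP'Q} needs $\ddF(P,Q)\ge\delta_m$. Both hold because $Q$ itself has complexity $m$, so $\delta_m\le\ddF(P,Q)$ by the definition of $\delta_m$.
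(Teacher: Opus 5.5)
Your proposal is correct and follows the paper's route: the corollary is stated as a direct consequence of \Cref{thm:1DUpperDS}, obtained by building the structure on $P$ with parameter $m$ in $\mathcal{O}(n\log n)$ time and answering the single query $Q$ in $\mathcal{O}(m^2\log m)$ time. Your observation that $\delta_m\le\ddF(P,Q)$ holds because $Q$ itself has complexity $m$ is the right justification for applying \Cref{lem:triangleInequPP'Q}.
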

By \Cref{thm:1D}, it follows that there does not exist an algorithm using $\mathcal{O}(n\log n+m^2\log m)$ time with a better approximation factor. Further, the lower bound in~\cite{BOS18} shows the almost optimality of the running time.
\section{\texorpdfstring{\boldmath $(3+\eps)$-approximation Algorithm for the Fréchet distance}{3+eps-approximation Algorithm for the Fréchet distance}}
In this section, we give a $(3+\eps)$-approximation algorithm to compute the continuous/discrete \F between two curves $P$ and $Q$ of complexity $n$ resp. $m$ with $m\leq n$. We describe this section for the continuous \F, but everything works analogously for the discrete \F as well.
We begin with the decision variant for~$\delta$. In the first step, using the curve~$Q$, we compute a curve~$P'$ of complexity at most $2m$ such that $\dF(P, P')\leq \delta$ or decide that $\dF(P, Q)>\delta$ in linear time.  

\Cref{Alg:simplwithQ} computes such a curve~$P'$. The curve~$P'$ consists of concatenated sub-edges of~$Q$, and their order along~$P'$ is the same as along~$Q$. The algorithm proceeds iteratively until the end of $P$ or $Q$ is reached. At the beginning of the $i$-th iteration, we have constructed a prefix $\langle Q(s_1), Q(t_1), Q(s_2), \dots Q(s_i), Q(t_i)\rangle$ of $P'$ whose \F to the prefix $P[1, a_i]$ of $P$ at most~$\delta$. We then search for the earliest point on $Q$ that lies within distance~$\delta$ of $P(a_i)$ that comes later on $Q$ than~$Q(t_i)$. If no such point exists, we show in \Cref{lem:simplWithQ} that $\dF(P,Q)>\delta$. Otherwise, we find the longest subcurve~$P[a_i, a_{i+1}]$ of~$P$, starting at~$P(a_i)$, within distance~$\delta$ to a part of the edge containing $Q(s_i)$. If $a_{i+1}=n$, then we found a curve~$P'$ of complexity at most $2m$ with distance at most~$\delta$ to $P$. 
\begin{algorithm}
    \caption{Simplification of $P$ depending on $Q$ \label{Alg:simplwithQ}}
    $s_0\gets0$, $a_1\gets 1$, $i\gets1$\;
    \While{True}{
        $s_{i}\gets \min\{s\geq \lfloor s_{i-1}+1\rfloor\mid \|P(a_{i})-Q(s)\|\leq \delta\}\cup \{\infty\}$\label{li:increaseS_i}\;
        \If{$s_{i}=\infty$}{
            Return $\dF(P, Q)>\delta$\;
        }
        Find largest $a_{i+1}$ such that $\exists\ t_i\leq \lfloor s_i+1\rfloor$ with $\dF(P[a_i, a_{i+1}], \overline{Q(s_i) Q(t_i)})\leq\delta$\label{li:findLargesA_i+1}\;
        \If{$a_{i+1}=n$}{
            Return $P'=\langle Q(s_1), Q(t_1), Q(s_2), Q(t_2), \dots, Q(s_i), Q(t_i)\rangle$
        }
        $i\gets i+1$\;
    }
\end{algorithm}

\begin{lemma}\label{lem:simplWithQ}
    \Cref{Alg:simplwithQ} computes a curve $P'$ of complexity at most $2m$ with $\dF(P, P')\leq \delta$ or returns $\dF(P, Q)>\delta$ in $\Oh(n)$ time.
\end{lemma}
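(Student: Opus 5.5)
We prove three things about \Cref{Alg:simplwithQ}: correctness of the output $P'$, correctness of the rejection, and the running time.

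\textbf{Output.} Each iteration appends the two vertices $Q(s_i),Q(t_i)$. The sequence $s_1<s_2<\dots$ satisfies $s_i\geq\lfloor s_{i-1}+1\rfloor$. Hence every vertex index of $Q$ is passed at least once per iteration, so at most $m$ iterations occur and $P'$ has complexity at most $2m$. By line~\ref{li:findLargesA_i+1}, $\dF(P[a_i,a_{i+1}],\overline{Q(s_i)Q(t_i)})\leq\delta$ for every $i$. Consecutive pieces of $P$ share the endpoint $P(a_{i+1})$, and that point is within $\delta$ of both $Q(t_i)$ and $Q(s_{i+1})$. A continuous version of \Cref{o:concatanation} therefore glues the pieces, using the connecting edge $\overline{Q(t_i)Q(s_{i+1})}$ of $P'$. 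That edge can be matched entirely to the single point $P(a_{i+1})$ because the ball of radius $\delta$ around it is convex. This gives $\dF(P,P')\leq\delta$.

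\textbf{Rejection.} This is the main obstacle. I would prove by induction the invariant: if $\dF(P,Q)\leq\delta$, then every matching $M$ realizing it matches $P(a_i)$ to a point $Q(y)$ with $y\geq s_i$. Moreover, for $i\geq 2$ it matches $P(a_i)$ to a point with $y\geq\lfloor s_{i-1}+1\rfloor$.

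For the step, suppose $P(a_{i+1})$ is matched to some $Q(y)$ with $y<\lfloor s_i+1\rfloor$. The part of $M$ between $P(a_i)$ and $P(a_{i+1})$ matches $P[a_i,a_{i+1}]$ to a subcurve $Q[y_0,y]$ with $s_i\leq y_0\leq y\leq\lfloor s_i+1\rfloor$, a subsegment of the edge containing $Q(s_i)$. I want to start that segment exactly at $Q(s_i)$. For this, note the initial part $Q[s_i,y_0]$ can be matched to $P(a_i)$: both endpoints are within $\delta$ of $P(a_i)$ (the first by the choice of $s_i$, the second since $M$ matches it), and the ball is convex. Thus $\dF(P[a_i,a_{i+1}+\eta],\overline{Q(s_i)Q(t)})\leq\delta$ for some $t\leq\lfloor s_i+1\rfloor$ and a slight extension $\eta>0$ whenever $a_{i+1}<n$. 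The extension is possible because $M$ continues past $P(a_{i+1})$ while staying on the same edge (if $y<\lfloor s_i+1\rfloor$), or else we argue with the boundary of the maximal set. This contradicts the maximality of $a_{i+1}$.

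Hence $P(a_{i+1})$ is matched to a point with $y\geq\lfloor s_i+1\rfloor$. Since that point is within $\delta$ of $P(a_{i+1})$, minimality gives $s_{i+1}\leq y$. Consequently, if line~\ref{li:increaseS_i} returns $\infty$, no such $y$ exists, so $\dF(P,Q)>\delta$. Some care is needed for the boundary case $y=\lfloor s_i+1\rfloor$ exactly; I would handle it by closedness of the free space, using that the largest $a_{i+1}$ exists.

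\textbf{Running time.} Line~\ref{li:increaseS_i} scans $Q$ forward from $\lfloor s_{i-1}+1\rfloor$ edge by edge, computing the first point of each edge inside the $\delta$-ball in $\Oh(1)$ per edge. The pointer into $Q$ only moves forward, so the total cost is $\Oh(m)$.

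Line~\ref{li:findLargesA_i+1} is a Fréchet computation against a single segment $\overline{Q(s_i)Q(\lfloor s_i+1\rfloor)}$. Walk along $P$ from $a_i$ and maintain the earliest reachable parameter on the segment. Each vertex of $P$ intersects its $\delta$-ball with the segment, giving an interval, and we update the lower end greedily. When the interval becomes empty or lies before the maintained parameter, compute the exact $a_{i+1}$ on the current edge of $P$ in $\Oh(1)$. Each edge of $P$ is handled in $\Oh(1)$ amortized time, since the pointer into $P$ only advances apart from the one overlapping edge per iteration. The total is therefore $\Oh(n+m)=\Oh(n)$.
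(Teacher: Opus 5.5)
Your proof is correct and follows the paper's structure. The complexity bound comes from the strictly increasing edge index of $s_i$, $\dF(P,P')\leq\delta$ comes from gluing the pieces found when computing $a_{i+1}$, rejection is justified by contradicting the maximality of $a_{i+1}$ and the minimality of $s_i$, and the running time comes from forward-only scans of both curves. Your induction — any $\delta$-matching keeps $P(a_{i+1})$ matched at or beyond $\lfloor s_i+1\rfloor$ — is a tighter organization of the paper's two-case argument. Your worry about the boundary case $y=\lfloor s_i+1\rfloor$ is unnecessary, since that value already satisfies the invariant.
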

\begin{proof}
    It holds that $s_i\geq i$ by line~\ref{li:increaseS_i} of \Cref{Alg:simplwithQ}. Further, it holds that $a_i\leq a_{i+1}$ for every~$i$ and line~\ref{li:findLargesA_i+1} has running time $\Oh(a_{i+1}-a_i)$ as $\overline{Q(s_i)Q(t_i)}\subset\overline{Q(s_i)Q(\lfloor s_i+1\rfloor)}$. Hence, the running time follows. 
    If a curve~$P'$ is returned, it holds that $\dF(P, P')\leq \delta$ by line~\ref{li:findLargesA_i+1} and \Cref{o:concatanation}. The complexity of~$P'$ is at most $2m$ since there are at most~$m$ iterations of the while loop. 
    It remains to show correctness for the case when $\dF(P, Q)>\delta$ is returned. Assume that it holds $\dF(P, Q)\leq\delta$. Let $M$ be the matching between~$P$ and $Q$ that realizes the \F. We consider two cases (see \Cref{fig:simplification_depending_on_Q}). In the first case, there exist $b, b'$ and an index $i$ such that $(b, s_i), (b', t_i)\in M$ with $b\leq a_i$ and $b'>a_{i+1}$. This contradicts line~\ref{li:findLargesA_i+1} of the algorithm. Otherwise, there must exist an index $i$ and a parameter~$c$ such that $(a_i, c)\in M$ with $t_{i-1}\leq c<s_i$. This contradicts line~\ref{li:increaseS_i}. Hence, it follows that $\dF(P, Q)>\delta$.
\end{proof}

\begin{figure}
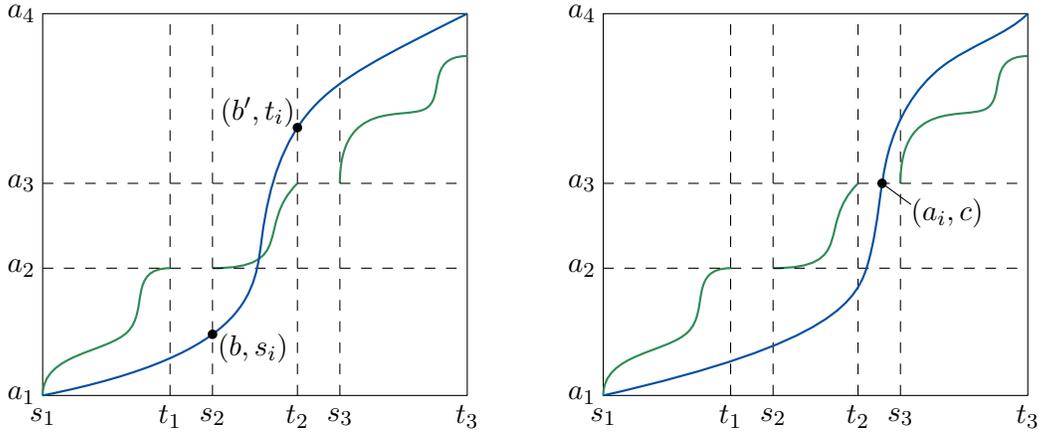

    \centering
    \includegraphics[page=1]{Figures/Simplification_depending_on_Q.pdf}\hspace{1cm}
    \includegraphics[page=2]{Figures/Simplification_depending_on_Q.pdf}
    \caption{The green curve is a matching between the curves $P[a_i, a_{i+1}]$ and $Q[s_i, t_i]$ computed in \Cref{Alg:simplwithQ}. The blue curves visualize the matching used in the proof of \Cref{lem:simplWithQ}.}
    \label{fig:simplification_depending_on_Q}
\end{figure}

\Cref{lem:simplWithQ} can be easily adapted such that it holds for the discrete \F. 
Given $P'$, we decide whether $\dF(P',Q)\leq 2\delta$ in $\Oh(m^2)$ time \cite{AG95} or slightly faster~\cite{CH25}. If $\dF(P',Q)\leq 2\delta$, then $\dF(P,Q)\leq \dF(P',Q)+\dF(P, P')\leq 3\delta$. Otherwise, it holds that $\dF(P,Q)\geq \dF(P',Q)-\dF(P, P')>\delta$. Hence, we obtain the following lemma.

\begin{lemma}\label{lem:decision_3-approx}
    Given two curves $P$ and $Q$ of complexity $n$ and $m$ in any metric space~$M$. Assume that we can compute distances in $M$ in constant time. 
    We can $3$-approximate the decision variant of the continuous/discrete \F in $\Oh(n+m^2)$ time.
\end{lemma}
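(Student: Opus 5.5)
The approach is to combine \Cref{lem:simplWithQ} with a quadratic decision procedure for the simplified curve and the triangle inequality. Given $\delta$, we first run \Cref{Alg:simplwithQ} on $P$ and $Q$. This takes $\Oh(n)$ time. The only primitive operations are distance evaluations and finding points of an edge of $Q$ within distance $\delta$ of a point of $P$. In a general metric space we restrict to the discrete setting (or to vertices), where these are just distance comparisons in constant time. By \Cref{lem:simplWithQ}, the algorithm either correctly reports $\dF(P,Q)>\delta$, or returns a curve $P'$ of complexity at most $2m$ with $\dF(P,P')\leq\delta$. For the discrete variant we use the adaptation of \Cref{lem:simplWithQ} mentioned above. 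In that version $P'$ consists of vertices of $Q$, and line~\ref{li:findLargesA_i+1} becomes a greedy scan of the vertices of $P$ that stay within distance $\delta$ of the current vertex $Q(s_i)$.

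In the second case, we decide whether $\dF(P',Q)\leq 2\delta$ (respectively $\ddF(P',Q)\leq 2\delta$). We use the free-space algorithm of Alt and Godau~\cite{AG95} (respectively the dynamic program of Eiter and Mannila~\cite{EM94}). Since $|P'|\leq 2m$ and $|Q|=m$, this costs $\Oh(m^2)$ time, so the total is $\Oh(n+m^2)$. We then output as follows.

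If $\dF(P',Q)\leq 2\delta$, we output \enquote{$\leq 3\delta$}. This is justified because the Fréchet distance satisfies the triangle inequality (matchings compose), so $\dF(P,Q)\leq \dF(P,P')+\dF(P',Q)\leq 3\delta$.

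Otherwise we output \enquote{$>\delta$}. This is justified because $\dF(P,Q)\geq \dF(P',Q)-\dF(P,P')>2\delta-\delta=\delta$.

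Either answer is a valid answer to the $3$-approximate decision problem: when $\dF(P,Q)\leq\delta$ we never answer \enquote{$>\delta$}, and when $\dF(P,Q)>3\delta$ we never answer \enquote{$\leq 3\delta$}.

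The main obstacle is justifying the model assumption. The lemma is stated for a general metric space, where the continuous Fréchet distance needs segments and free-space intervals (line~\ref{li:findLargesA_i+1} and the Alt--Godau cells). I would handle this by noting that in the continuous case the space is assumed to be a normed space such as an $L_p$ space, where these free-space intervals can be computed in constant time. For a general metric, the claim is made only for the discrete variant, which uses distance oracle calls exclusively. The rest is just the triangle inequality of $\dF$ and $\ddF$, which is standard.
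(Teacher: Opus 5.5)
Your proposal is correct and follows the paper's argument. You run \Cref{Alg:simplwithQ} in $\Oh(n)$ time to either reject or obtain $P'$ with $|P'|\leq 2m$ and $\dF(P,P')\leq\delta$, then decide $\dF(P',Q)\leq 2\delta$ in $\Oh(m^2)$ time and conclude in both directions by the triangle inequality. Your remark that the continuous variant implicitly needs a normed space with constant-time free-space interval computations (rather than an arbitrary metric space) is a fair clarification that the paper leaves unstated.
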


For curves in the Euclidean space, Colombe and Fox (Theorem 11 of \cite{CF21}) show how to transform a decision algorithm into an approximate optimization algorithm increasing the running time by only $\Oh(\log(n/\eps))$ factor. Since all $L_p$-norms on $\mathbb{R}^d$ are equivalent and \Cref{lem:decision_3-approx} holds for all $L_p$-norms, we can adopt their approach such that we can use it for all $L_p$-norms.

\begin{theorem}\label{thm:dDUpper}
    There exist a $(3+\eps)$-approximation algorithm for the continuous/discrete \F using $\Oh((n+m^2) \log(n/\eps))$ time.
\end{theorem}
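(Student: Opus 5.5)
The plan is to turn the $3$-approximate decision procedure of \Cref{lem:decision_3-approx} into an optimization algorithm by searching over candidate values of $\delta$. The decider, on input $\delta$, either certifies $\dF(P,Q)\leq 3\delta$ (answer ``yes'') or certifies $\dF(P,Q)>\delta$ (answer ``no''). Moreover, the answer is monotone in the certified sense: a ``no'' at $\delta$ forces $\dF>\delta$, so every $\delta'\leq\delta$ also has $\dF>\delta'$.

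The search proceeds in three steps.

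\textbf{First,} obtain a coarse range for $\dF(P,Q)$. We have $\dF(P,Q)\geq\max\{\|P(1)-Q(1)\|,\|P(n)-Q(m)\|\}=:L$. An upper bound within a polynomial factor is needed, which is where Colombe and Fox (Theorem~11 of \cite{CF21}) come in. Their argument in the Euclidean case uses a crude approximation, for instance via a grid or well-separated-pair-type reduction, to locate $\dF$ within a factor $\mathrm{poly}(n)$ using $\Oh(\log n)$ decider calls.

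To use it under an arbitrary $L_p$-norm, I would run their reduction with the Euclidean norm only for the coarse localization. Norm equivalence, $\|x\|_2/\sqrt{d}\leq\|x\|_p\leq d\,\|x\|_2$, means an interval $[\alpha,\beta]$ containing the $L_2$ Fréchet distance with $\beta/\alpha\leq\mathrm{poly}(n)$ yields an interval for the $L_p$ distance with ratio $\mathrm{poly}(n)\cdot\Oh(d)$. Treating $d$ as constant (or absorbing it into the polynomial), the logarithm stays $\Oh(\log n)$.

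\textbf{Second,} binary search on a geometric grid $\delta_k=\alpha(1+\eps/3)^k$ covering $[\alpha,\beta]$. This grid has $\Oh(\log(n)/\eps)$ points, so binary search needs $\Oh(\log(\log n/\eps))\subseteq\Oh(\log(n/\eps))$ calls to the decider. Find the smallest $k$ such that the decider says ``yes'' at $\delta_k$. Since it says ``no'' at $\delta_{k-1}$, we get
\[
\delta_{k-1}<\dF\leq 3\delta_k=3(1+\eps/3)\delta_{k-1}.
\]
Outputting $3\delta_k$ is therefore a $(3+\eps)$-approximation.

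Non-monotone answers of the approximate decider between $\delta$ and $3\delta$ are harmless. Binary search only ever uses a pair of adjacent grid points where the answers differ, and the two certificates hold regardless.

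\textbf{Third,} count the running time. Each decider call costs $\Oh(n+m^2)$, giving $\Oh((n+m^2)\log(n/\eps))$ in total. The discrete case is identical, using the discrete version of \Cref{lem:simplWithQ}; there one can even take the candidate set to be the pairwise vertex distances.

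The main obstacle is the first step: obtaining, within $\Oh(\log n)$ decider calls, an interval of polynomially bounded ratio, without a degenerate case where the lower bound $L$ is $0$. I would first try to import \cite{CF21}'s reduction verbatim under $L_2$ and transfer it via norm equivalence as above. If degeneracies arise, I would handle them by checking $\delta=0$ exactly, and otherwise by using the smallest nonzero relevant vertex distance scale, as in their proof.
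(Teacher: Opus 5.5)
Your proposal is correct and follows the paper's route. The paper combines the $3$-approximate decider of \Cref{lem:decision_3-approx} with the decision-to-optimization reduction of Colombe and Fox~\cite{CF21} and appeals to the equivalence of $L_p$-norms; you make this explicit by using \cite{CF21} under $L_2$ only for a coarse bracket, transferring it to $L_p$, and finishing with a $(1+\eps/3)$-grid binary search that correctly tolerates the non-monotone decider. One small caution: your aside that in the discrete case one could binary search over all pairwise vertex distances would cost $\Theta(nm)$ time just to generate them, so use the same coarse-localization step there instead.
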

\bibliography{literature.bib}

@inproceedings{BD24,
  author       = {Lotte Blank and
                  Anne Driemel},
  editor       = {Timothy M. Chan and
                  Johannes Fischer and
                  John Iacono and
                  Grzegorz Herman},
  title        = {A Faster Algorithm for the {F}r{\'{e}}chet Distance in 1D for the Imbalanced Case},
  booktitle    = {32nd Annual European Symposium on Algorithms, {ESA} 2024, September
                  2-4, 2024, Royal Holloway, London, United Kingdom},
  series       = {LIPIcs},
  volume       = {308},
  pages        = {28:1--28:15},
  publisher    = {Schloss Dagstuhl - Leibniz-Zentrum f{\"{u}}r Informatik},
  year         = {2024},
  url          = {https://doi.org/10.4230/LIPIcs.ESA.2024.28},
  doi          = {10.4230/LIPICS.ESA.2024.28},
  bibsource    = {dblp computer science bibliography, https://dblp.org}
}

@inproceedings{BC19,
  author       = {Karl Bringmann and
                  Bhaskar Ray Chaudhury},
  editor       = {Gill Barequet and
                  Yusu Wang},
  title        = {Polyline Simplification has Cubic Complexity},
  booktitle    = {35th International Symposium on Computational Geometry, SoCG 2019,
                  June 18-21, 2019, Portland, Oregon, {USA}},
  series       = {LIPIcs},
  volume       = {129},
  pages        = {18:1--18:16},
  publisher    = {Schloss Dagstuhl - Leibniz-Zentrum f{\"{u}}r Informatik},
  year         = {2019},
  url          = {https://doi.org/10.4230/LIPIcs.SoCG.2019.18},
  doi          = {10.4230/LIPICS.SOCG.2019.18},
  bibsource    = {dblp computer science bibliography, https://dblp.org}
}

@inproceedings{BJWYZ08,
  author       = {Sergey Bereg and
                  Minghui Jiang and
                  Wencheng Wang and
                  Boting Yang and
                  Binhai Zhu},
  editor       = {Eduardo Sany Laber and
                  Claudson F. Bornstein and
                  Loana Tito Nogueira and
                  Lu{\'{e}}rbio Faria},
  title        = {Simplifying 3D Polygonal Chains Under the Discrete {F}r{\'{e}}chet
                  Distance},
  booktitle    = {{LATIN} 2008: Theoretical Informatics, 8th Latin American Symposium,
                  B{\'{u}}zios, Brazil, April 7-11, 2008, Proceedings},
  series       = {Lecture Notes in Computer Science},
  volume       = {4957},
  pages        = {630--641},
  publisher    = {Springer},
  year         = {2008},
  url          = {https://doi.org/10.1007/978-3-540-78773-0\_54},
  doi          = {10.1007/978-3-540-78773-0\_54},
  bibsource    = {dblp computer science bibliography, https://dblp.org}
}

@inproceedings{KLW18,
  author       = {Marc J. van Kreveld and
                  Maarten L{\"{o}}ffler and
                  Lionov Wiratma},
  editor       = {Bettina Speckmann and
                  Csaba D. T{\'{o}}th},
  title        = {On Optimal Polyline Simplification Using the {H}ausdorff and {F}r{\'{e}}chet
                  Distance},
  booktitle    = {34th International Symposium on Computational Geometry, SoCG 2018,
                  June 11-14, 2018, Budapest, Hungary},
  series       = {LIPIcs},
  volume       = {99},
  pages        = {56:1--56:14},
  publisher    = {Schloss Dagstuhl - Leibniz-Zentrum f{\"{u}}r Informatik},
  year         = {2018},
  url          = {https://doi.org/10.4230/LIPIcs.SoCG.2018.56},
  doi          = {10.4230/LIPICS.SOCG.2018.56},
  bibsource    = {dblp computer science bibliography, https://dblp.org}
}

@article{AHMW05,
  author       = {Pankaj K. Agarwal and
                  Sariel Har{-}Peled and
                  Nabil H. Mustafa and
                  Yusu Wang},
  title        = {Near-Linear Time Approximation Algorithms for Curve Simplification},
  journal      = {Algorithmica},
  volume       = {42},
  number       = {3-4},
  pages        = {203--219},
  year         = {2005},
  url          = {https://doi.org/10.1007/s00453-005-1165-y},
  doi          = {10.1007/S00453-005-1165-Y},
  bibsource    = {dblp computer science bibliography, https://dblp.org}
}

@article{DPS19,
  author       = {Anne Driemel and
                  Ioannis Psarros and
                  Melanie Schmidt},
  title        = {Sublinear data structures for short {F}r{\'{e}}chet queries},
  journal      = {CoRR},
  volume       = {abs/1907.04420},
  year         = {2019},
  url          = {http://arxiv.org/abs/1907.04420},
  eprinttype    = {arXiv},
  eprint       = {1907.04420},
  bibsource    = {dblp computer science bibliography, https://dblp.org}
}

@article{Wil05,
  author       = {Ryan Williams},
  title        = {A new algorithm for optimal 2-constraint satisfaction and its implications},
  journal      = {Theor. Comput. Sci.},
  volume       = {348},
  number       = {2-3},
  pages        = {357--365},
  year         = {2005},
  url          = {https://doi.org/10.1016/j.tcs.2005.09.023},
  doi          = {10.1016/J.TCS.2005.09.023},
  bibsource    = {dblp computer science bibliography, https://dblp.org}
}

@inproceedings{BK18,
  author       = {Karl Bringmann and
                  Marvin K{\"{u}}nnemann},
  editor       = {Artur Czumaj},
  title        = {Multivariate Fine-Grained Complexity of Longest Common Subsequence},
  booktitle    = {Proceedings of the Twenty-Ninth Annual {ACM-SIAM} Symposium on Discrete
                  Algorithms, {SODA} 2018, New Orleans, LA, USA, January 7-10, 2018},
  pages        = {1216--1235},
  publisher    = {{SIAM}},
  year         = {2018},
  url          = {https://doi.org/10.1137/1.9781611975031.79},
  doi          = {10.1137/1.9781611975031.79},
  bibsource    = {dblp computer science bibliography, https://dblp.org}
}

@inproceedings{R18,
  author       = {Aviad Rubinstein},
  editor       = {Ilias Diakonikolas and
                  David Kempe and
                  Monika Henzinger},
  title        = {Hardness of approximate nearest neighbor search},
  booktitle    = {Proceedings of the 50th Annual {ACM} {SIGACT} Symposium on Theory
                  of Computing, {STOC} 2018, Los Angeles, CA, USA, June 25-29, 2018},
  pages        = {1260--1268},
  publisher    = {{ACM}},
  year         = {2018},
  url          = {https://doi.org/10.1145/3188745.3188916},
  doi          = {10.1145/3188745.3188916},
  bibsource    = {dblp computer science bibliography, https://dblp.org}
}

@inproceedings{CHJ25,
  author       = {Siu{-}Wing Cheng and
                  Haoqiang Huang and
                  Le Jiang},
  editor       = {Oswin Aichholzer and
                  Haitao Wang},
  title        = {Simplification of Trajectory Streams},
  booktitle    = {41st International Symposium on Computational Geometry, SoCG 2025,
                  June 23-27, 2025, Kanazawa, Japan},
  series       = {LIPIcs},
  volume       = {332},
  pages        = {34:1--34:14},
  publisher    = {Schloss Dagstuhl - Leibniz-Zentrum f{\"{u}}r Informatik},
  year         = {2025},
  url          = {https://doi.org/10.4230/LIPIcs.SoCG.2025.34},
  doi          = {10.4230/LIPICS.SOCG.2025.34},
  bibsource    = {dblp computer science bibliography, https://dblp.org}
}

@inproceedings{CH23,
  author       = {Siu{-}Wing Cheng and
                  Haoqiang Huang},
  editor       = {Nikhil Bansal and
                  Viswanath Nagarajan},
  title        = {Curve Simplification and Clustering under {F}r{\'{e}}chet Distance},
  booktitle    = {Proceedings of the 2023 {ACM-SIAM} Symposium on Discrete Algorithms,
                  {SODA} 2023, Florence, Italy, January 22-25, 2023},
  pages        = {1414--1432},
  publisher    = {{SIAM}},
  year         = {2023},
  url          = {https://doi.org/10.1137/1.9781611977554.ch51},
  doi          = {10.1137/1.9781611977554.CH51},
  bibsource    = {dblp computer science bibliography, https://dblp.org}
}

@article{EM94,
author = {Eiter, Thomas and Mannila, Heikki},
year = {1994},
month = {05},
pages = {},
title = {Computing Discrete {F}réchet Distance}
}

@inproceedings{CHZ25,
  author       = {Siu{-}Wing Cheng and
                  Haoqiang Huang and
                  Shuo Zhang},
  editor       = {Michal Kouck{\'{y}} and
                  Nikhil Bansal},
  title        = {Constant Approximation of {F}r{\'{e}}chet Distance in Strongly
                  Subquadratic Time},
  booktitle    = {Proceedings of the 57th Annual {ACM} Symposium on Theory of Computing,
                  {STOC} 2025, Prague, Czechia, June 23-27, 2025},
  pages        = {2329--2340},
  publisher    = {{ACM}},
  year         = {2025},
  url          = {https://doi.org/10.1145/3717823.3718157},
  doi          = {10.1145/3717823.3718157},
  bibsource    = {dblp computer science bibliography, https://dblp.org}
}

@inproceedings{CH25,
  author       = {Siu{-}Wing Cheng and
                  Haoqiang Huang},
  editor       = {Yossi Azar and
                  Debmalya Panigrahi},
  title        = {{F}r{\'{e}}chet Distance in Subquadratic Time},
  booktitle    = {Proceedings of the 2025 Annual {ACM-SIAM} Symposium on Discrete Algorithms,
                  {SODA} 2025, New Orleans, LA, USA, January 12-15, 2025},
  pages        = {5100--5113},
  publisher    = {{SIAM}},
  year         = {2025},
  url          = {https://doi.org/10.1137/1.9781611978322.173},
  doi          = {10.1137/1.9781611978322.173},
  bibsource    = {dblp computer science bibliography, https://dblp.org}
}

@article{GSW24,
  author       = {Joachim Gudmundsson and
                  Martin P. Seybold and
                  Sampson Wong},
  title        = {Map Matching Queries on Realistic Input Graphs Under the {F}r{\'{e}}chet
                  Distance},
  journal      = {{ACM} Trans. Algorithms},
  volume       = {20},
  number       = {2},
  pages        = {14},
  year         = {2024},
  url          = {https://doi.org/10.1145/3643683},
  doi          = {10.1145/3643683},
  bibsource    = {dblp computer science bibliography, https://dblp.org}
}

@inproceedings{BDNP21,
  author       = {Karl Bringmann and
                  Anne Driemel and
                  Andr{\'{e}} Nusser and
                  Ioannis Psarros},
  editor       = {Joseph (Seffi) Naor and
                  Niv Buchbinder},
  title        = {Tight Bounds for Approximate Near Neighbor Searching for Time Series
                  under the {F}r{\'{e}}chet Distance},
  booktitle    = {Proceedings of the 2022 {ACM-SIAM} Symposium on Discrete Algorithms,
                  {SODA} 2022, Virtual Conference / Alexandria, VA, USA, January 9 -
                  12, 2022},
  pages        = {517--550},
  publisher    = {{SIAM}},
  year         = {2022},
  doi          = {10.1137/1.9781611977073.25},
  bibsource    = {dblp computer science bibliography, https://dblp.org}
}

@inproceedings{BOS18,
  author       = {Kevin Buchin and
                  Tim Ophelders and
                  Bettina Speckmann},
  editor       = {Timothy M. Chan},
  title        = {{SETH} Says: Weak {F}r{\'{e}}chet Distance is Faster, but only
                  if it is Continuous and in One Dimension},
  booktitle    = {Proceedings of the Thirtieth Annual {ACM-SIAM} Symposium on Discrete
                  Algorithms, {SODA} 2019, San Diego, California, USA, January 6-9,
                  2019},
  pages        = {2887--2901},
  publisher    = {{SIAM}},
  year         = {2019},
  doi          = {10.1137/1.9781611975482.179},
  bibsource    = {dblp computer science bibliography, https://dblp.org}
}

@article{AG95,
  author       = {Helmut Alt and
                  Michael Godau},
  title        = {Computing the {F}r{\'{e}}chet distance between two polygonal curves},
  journal      = {Int. J. Comput. Geom. Appl.},
  volume       = {5},
  pages        = {75--91},
  year         = {1995},
  doi          = {10.1142/S0218195995000064},
  bibsource    = {dblp computer science bibliography, https://dblp.org}
}

@article{FJ84,
  author       = {Greg N. Frederickson and
                  Donald B. Johnson},
  title        = {Generalized Selection and Ranking: Sorted Matrices},
  journal      = {{SIAM} J. Comput.},
  volume       = {13},
  number       = {1},
  pages        = {14--30},
  year         = {1984},
  url          = {https://doi.org/10.1137/0213002},
  doi          = {10.1137/0213002},
  bibsource    = {dblp computer science bibliography, https://dblp.org}
}

@article{FF23,
  author       = {Arnold Filtser and
                  Omrit Filtser},
  title        = {Static and Streaming Data Structures for {F}r{\'{e}}chet Distance
                  Queries},
  journal      = {{ACM} Trans. Algorithms},
  volume       = {19},
  number       = {4},
  pages        = {39:1--39:36},
  year         = {2023},
  url          = {https://doi.org/10.1145/3610227},
  doi          = {10.1145/3610227},
  bibsource    = {dblp computer science bibliography, https://dblp.org}
}

@article{DH13,
  author       = {Anne Driemel and
                  Sariel Har{-}Peled},
  title        = {Jaywalking Your Dog: Computing the {F}r{\'{e}}chet Distance with Shortcuts},
  journal      = {{SIAM} J. Comput.},
  volume       = {42},
  number       = {5},
  pages        = {1830--1866},
  year         = {2013},
  url          = {https://doi.org/10.1137/120865112},
  doi          = {10.1137/120865112},
  bibsource    = {dblp computer science bibliography, https://dblp.org}
}

@inproceedings{CH24,
  author       = {Siu{-}Wing Cheng and
                  Haoqiang Huang},
  editor       = {David P. Woodruff},
  title        = {Solving {F}r{\'{e}}chet Distance Problems by Algebraic Geometric
                  Methods},
  booktitle    = {Proceedings of the 2024 {ACM-SIAM} Symposium on Discrete Algorithms,
                  {SODA} 2024, Alexandria, VA, USA, January 7-10, 2024},
  pages        = {4502--4513},
  publisher    = {{SIAM}},
  year         = {2024},
  url          = {https://doi.org/10.1137/1.9781611977912.158},
  doi          = {10.1137/1.9781611977912.158},
  bibsource    = {dblp computer science bibliography, https://dblp.org}
}

@inproceedings{CF21,
  author       = {Connor Colombe and
                  Kyle Fox},
  editor       = {Kevin Buchin and
                  {\'{E}}ric Colin de Verdi{\`{e}}re},
  title        = {Approximating the (Continuous) {F}r{\'{e}}chet Distance},
  booktitle    = {37th International Symposium on Computational Geometry, SoCG 2021,
                  June 7-11, 2021, Buffalo, NY, {USA} (Virtual Conference)},
  series       = {LIPIcs},
  volume       = {189},
  pages        = {26:1--26:14},
  publisher    = {Schloss Dagstuhl - Leibniz-Zentrum f{\"{u}}r Informatik},
  year         = {2021},
  url          = {https://doi.org/10.4230/LIPIcs.SoCG.2021.26},
  doi          = {10.4230/LIPICS.SOCG.2021.26},
  bibsource    = {dblp computer science bibliography, https://dblp.org}
}

@article{HO24,
  author       = {Thijs van der Horst and
                  Tim Ophelders},
  title        = {Faster {F}r{\'{e}}chet Distance Approximation through Truncated
                  Smoothing},
  journal      = {CoRR},
  volume       = {abs/2401.14815},
  year         = {2024},
  url          = {https://doi.org/10.48550/arXiv.2401.14815},
  doi          = {10.48550/ARXIV.2401.14815},
  eprinttype    = {arXiv},
  eprint       = {2401.14815},
  bibsource    = {dblp computer science bibliography, https://dblp.org}
}

@inproceedings{HKOS23,
  author       = {Thijs van der Horst and
                  Marc J. van Kreveld and
                  Tim Ophelders and
                  Bettina Speckmann},
  editor       = {Nikhil Bansal and
                  Viswanath Nagarajan},
  title        = {A Subquadratic \emph{n}\({}^{\mbox{{\(\epsilon\)}}}\)-approximation for the Continuous {F}r{\'{e}}chet Distance},
  booktitle    = {Proceedings of the 2023 {ACM-SIAM} Symposium on Discrete Algorithms,
                  {SODA} 2023, Florence, Italy, January 22-25, 2023},
  pages        = {1759--1776},
  publisher    = {{SIAM}},
  year         = {2023},
  url          = {https://doi.org/10.1137/1.9781611977554.ch67},
  doi          = {10.1137/1.9781611977554.CH67},
  bibsource    = {dblp computer science bibliography, https://dblp.org}
}

@article{AAK2014,
author = {Agarwal, Pankaj K. and Avraham, Rinat Ben and Kaplan, Haim and Sharir, Micha},
title = {Computing the Discrete {F}réchet Distance in Subquadratic Time},
journal = {SIAM Journal on Computing},
volume = {43},
number = {2},
pages = {429-449},
year = {2014},
doi = {10.1137/130920526}
}

@article{BBMM2017,
author = {Buchin, Kevin and Buchin, Maike and Meulemans, Wouter and Mulzer, Wolfgang},
title = {Four Soviets Walk the Dog: Improved Bounds for Computing the {F}r\'{e}chet Distance},
year = {2017},
issue_date = {July      2017},
publisher = {Springer-Verlag},
address = {Berlin, Heidelberg},
volume = {58},
number = {1},
issn = {0179-5376},
url = {https://doi.org/10.1007/s00454-017-9878-7},
doi = {10.1007/s00454-017-9878-7},
journal = {Discrete Comput. Geom.},
month = {jul},
pages = {180–216},
numpages = {37}
}

@inproceedings{bringmann2014walking,
  title={Why walking the dog takes time: {F}r\'{e}chet distance has no strongly subquadratic algorithms unless SETH fails},
  author={Bringmann, Karl},
  booktitle={2014 IEEE 55th Annual Symposium on Foundations of Computer Science},
  pages={661--670},
  year={2014},
  organization={IEEE}
}

@article{bringmann2016approximability,
  title={Approximability of the discrete {F}r{\'e}chet distance},
  author={Bringmann, Karl and Mulzer, Wolfgang},
  journal={Journal of Computational Geometry},
  volume={7},
  number={2},
  pages={46--76},
  year={2016}
}
\end{document}